\documentclass[12pt]{amsart}

\usepackage[T1]{fontenc}
\usepackage{lmodern}
\usepackage{amsmath}
\usepackage{amssymb}
\usepackage{amsfonts}
\usepackage{mathtools}
\usepackage{microtype}
\usepackage[hidelinks]{hyperref}
\usepackage{graphicx}

\theoremstyle{plain}
\newtheorem{theorem}{Theorem}[section]

\newtheorem{lemma}[theorem]{Lemma}
\newtheorem{corollary}[theorem]{Corollary}
\theoremstyle{definition}
\newtheorem{definition}[theorem]{Definition}
\theoremstyle{remark}

\numberwithin{equation}{section}
\newcommand{\func}[1]{\operatorname{#1}}

\begin{document}
\title[Isospectral minimization and Ky Fan isoperimetry]{\textbf{Isospectral majorization and isoperimetric inequalities for coherent states on the Bloch sphere}}
\author{Lu\'is Daniel Abreu}
\date{}

\begin{abstract}
Let $\mathcal{P}_{N}$ be the $(N+1)$-dimensional Hilbert space of analytic
polynomials of degree at most $N$.
This is the natural environment to define $SU(2)$ (Bloch) coherent states.
Let $Q_{\rho }$ be the Husimi function of a density operator $\rho $ on $%
\mathcal{P}_{N}$. We prove an isospectral version of Lieb-Solovej
inequality: if $\rho ^{\downarrow }$ is obtained by placing the eigenvalues
of $\rho $ in decreasing order along the monomial basis, then 
\begin{equation*}
\int_{\mathbb{C}}\Phi (Q_{\rho }(z))\,dm(z)\leq \int_{\mathbb{C}}\Phi
(Q_{\rho ^{\downarrow }}(z))\,dm(z)
\end{equation*}%
for every convex function $\Phi $ on $[0,1]$. Applying the corresponding
reversed inequality to the concave function $\Phi (t)=-t\log t$ gives the
Wehrl entropy. In the process it is show that the output
state of $\rho $ under Lieb-Solovej's channel is
majorized by the output of the state $\rho
^{\downarrow }$.

As an application, among all measurable subsets of the sphere having a fixed
area, spherical caps maximize every partial sum of the eigenvalues of
Toeplitz operators with symbol $1_{\Omega}$. Equivalently, caps maximize all
Ky Fan norms, leading to the sharp inequalities, previously known for $r=1$:
\begin{equation*}
\sum_{j=1}^{r}\lambda_{j}(\Omega)
\leq
r-\sum_{k=0}^{r-1}(r-k)\binom{N+1}{k}
m(\Omega)^{k}\bigl(1-m(\Omega)\bigr)^{N+1-k}.
\end{equation*}
This implies isoperimetric inequalities for all
Schatten sums of $T_{\Omega}$, obtained without using the spherical isoperimetric inequality.
\end{abstract}

\maketitle

\section{Introduction}

Let 
\begin{equation*}
dm(z)=\frac{dA(z)}{\pi (1+|z|^{2})^{2}},\qquad z\in \mathbb{C}\text{,}
\end{equation*}%
where $dA(z)=dxdy$ is the ordinary Euclidean area measure on $\mathbb{C}$,
so that $dm(z)$ is the stereographic image of normalized area measure on the
unit sphere. For $N\geq 0$, let $\mathcal{P}_{N}$ denote the space of
analytic polynomials of degree at most $N$, with inner product 
\begin{equation*}
\langle f,g\rangle _{_{\mathcal{P}_{N}}}=(N+1)\int_{\mathbb{C}}\frac{f(z)%
\overline{g(z)}}{(1+|z|^{2})^{N}}\,dm(z)\text{.}
\end{equation*}%
The functions 
\begin{equation}
e_{n}^{(N)}(z)=\binom{N}{n}^{1/2}z^{n},\qquad 0\leq n\leq N\text{,}
\label{monomialBasis}
\end{equation}%
form an orthonormal basis. We suppress the superscript when the degree is
clear. The reproducing kernels of $\mathcal{P}_{N}$ and the associated
normalized kernels at $z$, are correspondingly given as 
\begin{equation*}
K_{N}(w,z)=(1+w\overline{z})^{N},\text{ \ \ \ \ \ \ \ \ \ \ \ \ \ }\kappa
_{N,z}(w)=\frac{(1+w\overline{z})^{N}}{(1+|z|^{2})^{N/2}}\text{.}
\end{equation*}%
We will often use ket notation \cite[Chapter 1]{Petz}.
For $f,g\in \mathcal{P}_{N}$, $|f\rangle \langle g|$ denotes the rank-$1$
operator defined by 
\begin{equation*}
\left( |f\rangle \langle g|\right) |\left. F\right\rangle =\left\langle
F,g\right\rangle _{\mathcal{P}_{N}}|\left. f\right\rangle ,\text{ \ \ \ \ }%
F\in \mathcal{P}_{N}\text{.}
\end{equation*}%
The normalized kernels satisfy, in this notation, the resolution of the
identity 
\begin{equation}
(N+1)\int_{\mathbb{C}}|\kappa _{N,z}\rangle \langle \kappa _{N,z}|\,dm(z)=I_{%
\mathcal{P}_{N}}\text{.}  \label{eq:resolution}
\end{equation}%
A unit vector $(u,v)\in \mathbb{C}^{2}$, $|u|^{2}+|v|^{2}=1$ (named as a 
\emph{spinor} in physics language) defines a $SU(2)$ coherent state by 
\begin{equation}
\kappa _{N,(u,v)}(w),=\sum_{n=0}^{N}\binom{N}{n}%
^{1/2}u^{N-n}v^{n}e_{n}^{(N)}(w)\text{.}  \label{eq:spincoherent}
\end{equation}%
If we choose%
\begin{equation*}
u=\frac{1}{\sqrt{1+\left\vert z\right\vert ^{2}}}\text{, \ \ \ \ \ \ \ \ }v=%
\frac{\overline{z}}{\sqrt{1+\left\vert z\right\vert ^{2}}}\text{,}
\end{equation*}%
then $\kappa _{N,(u,v)}(w)=\kappa _{N,z}(w)$. These coherent states in $%
\mathcal{P}_{N}$ are usually called Bloch or $SU(2)$ coherent states. Their
phase space is the Riemann sphere, written in a stereographic coordinate $%
z\in \mathbb{C}$.

 A density operator is a positive trace-class operator $%
\rho $ with $\func{Tr}(\rho )=1$ \cite[Section 2.1]{Petz}.  Since $\dim \mathcal{P}_{N}=N+1$, every density operator in $%
\mathcal{P}_{N}$ satisfies $1\leq Rank(\rho )\leq N+1$.

To give an adequate perspective for the work presented in this paper, we
first present an outline of the related results by Lieb and Solovej achieved
in the fundamental paper \cite{LiebSolovej}, where, in particular, a
previous conjecture of Lieb \cite{Lieb1978} for the Wehrl entropy of $SU(2)$
coherent states was confirmed. More precisely, Lieb and Solovej proved in 
\cite[Theorem~2.1]{LiebSolovej} that, for every density operator $\rho $ on $%
\mathcal{P}_{N}$, every convex function $f:[0,1]\rightarrow \mathbb{R}$, and
every rank-$1$ coherent-state projection $\rho _{\mathrm{coh}}$, 
\begin{equation}
\int_{\mathbb{C}}f(Q_{\rho }(z))\,dm(z)\leq \int_{\mathbb{C}}f(Q_{\rho _{%
\mathrm{coh}}}(z))\,dm(z)\text{.}  \label{eq:LShusimi}
\end{equation}%
For concave $f$, the inequality is reversed. Taking $f(t)=-t\log t$ proves
the Bloch coherent-state Wehrl conjecture. This result is the starting point
of our investigations.

Our first observation is that the comparison in inequality (\ref{eq:LShusimi}%
) does not preserve the spectrum of $\rho $, since a general density
operator $\rho $ or arbitrary rank is compared with the coherent pure state $%
\rho _{\mathrm{coh}}$. This can be illustrated with the simple example of a
rank-$r$ projection $P$ and the associated density operator $\rho =P/r$,
then the spectrum of $\rho $ is 
\begin{equation}
\left( \frac{1}{r},\ldots ,\frac{1}{r},0,\ldots ,0\right) \text{.}
\label{spectumInputLS}
\end{equation}%
In contrast, $\rho _{\mathrm{coh}}$ is a pure state, which, by definition,
is a positive trace-one operator of rank one. Thus, the spectrum (\ref%
{spectumInputLS}) is compared with the pure spectrum $(1,0,\ldots ,0)$. The
result therefore identifies a global maximizer for density operators with
arbitrary spectrum, but it does not identify a density operator maximizing a
prescribed higher rank spectral class. Such maximizers are important for the
understanding of the quantum features of density operators, since they
represent the most classical state of the system. In other words, they can
be regarded as\emph{\ }Bloch coherent state version of (rank-$r$) \emph{%
quantum Gaussian states}, which are the solutions of the constrained minimum
output conjectures. We refer to \cite[Conjecture V.1]{DPTGA} for a topical
review of these conjectures for the Fock states.

The terminology `quantum Gaussian states' has a clear intuition behind. Real
and complex Gaussian functions play a distinguished role among rank-$1$
operators in Euclidean settings: they minimize Heisenberg uncertainty,
describe Wigner positivity \cite{Hudson}, and, by Lieb's theorem \cite%
{Lieb1978}, maximize the integral of convex functions of the Husimi
function. Maximization relations of this nature, comparing averaged
amplitudes of coherent state transforms of a given density operator and its
minimizer (the corresponding Gaussian operator of the system), can be
leveraged to compute the \emph{Gausssianity deficit }of the quantum state
described by the density operator. Such deficits are of paramount importance
in quantum science, because they reveal, in a quantitative form, the quantum
features of the given state \cite{DPTGA,Stellar}. This becomes of particular
importance in the Bloch phase space, where the Wigner negativity exhibited
by some coherent states, severely limits its use as a measure of quantum
resources \cite{WignerSU2}. In this mathematical work, we will not dwell
deeper into this direction, but rather emphasize it as the physical
motivation for the presented results, which might be useful for the
investigation of Gaussianity deficits of Bloch coherent states. In this
sense, the Lieb--Solovej inequality provides a natural deficit relative to
coherent states. Our comparison yields a finer deficit adapted to the
prescribed-spectrum problem. Indeed, comparing (\ref{eq:LShusimi}) with our
inequality in Theorem \ref{thm:sphericaldepalma}, gives, for a convex
function $\Phi :[0,1]\rightarrow \mathbb{R}$, 
\begin{equation}
\int_{\mathbb{C}}\Phi (Q_{\rho }(z))\,dm(z)\leq \int_{\mathbb{C}}\Phi
(Q_{\rho ^{\downarrow }}(z))\,dm(z)\leq \int_{\mathbb{C}}\Phi (Q_{\rho _{%
\mathrm{coh}}}(z))\,dm(z)\text{,}  \label{3Husimis}
\end{equation}%
where $\rho ^{\downarrow }$ is obtained by placing the eigenvalues of $\rho $
in decreasing order along the monomial weight basis. Another motivation
arises from the recent work of Aschieri, Ruba and Solovej \cite{Aschieri},
where it is shown that there exist $SU(2)$-equivariant channels whose
minimizers are not coherent states, therefore suggesting that the landscape
of such minimizers can be quite vast. While the channel we will use is the
same as in Lieb-Solovej, and well known to be $SU(2)$-equivariant, it also
contributes to this diversity of minimizers, by finding one constrained to
have the same spectrum of the original density operator. Indeed, from
Theorem \ref{thm:iterated} and \cite[Theorem 2.4]{LiebSolovej} we obtain a
majorization relation in the same order of (\ref{3Husimis}) 
\begin{equation*}
\mathcal{C}_{N}(\rho )\prec \mathcal{C}_{N}(\rho ^{\downarrow })\prec 
\mathcal{C}_{N}(\rho _{\mathrm{coh}})\text{.}
\end{equation*}

For the rank-$1$ case of $SU(2)$ coherent states, sharp results,
characterizing the maximizers, have been obtained recently. The sharp local
rank-one statement, including the characterization of coherent states and
spherical caps as the maximizers, was obtained by Frank \cite{Frank} and,
independently, by Kulikov, Nicola, Ortega-Cerd\`{a} and Tilli \cite%
{Frank,KNOT}. Garc\'{\i}a-Ferrero and Ortega-Cerd\`{a} later proved a
quantitative stability theorem in one complex dimension \cite{GFOC}. These
results settle the optimizer and stability questions in the rank-$1$ case
for the space $\mathcal{P}_{N}$. Related results have been also obtained for
Fock coherent states \cite{Lieb1978,NicolaTilli2022,GGRT} and for $SU(1,1)$ 
\cite{Kulikov,RamTilli,ProcLMS}, and provide optimal bounds for the
sparsity/concentration trade-off, previously studied using non-sharp large
sieve type methods \cite{AbrSpeck,Speck}. We also observe that a
maximization inequality for general density operators in a more general
coherent states context was obtained in \cite[Section 2.3]{Frank}, but, in
the spirit of Lieb-Solovej inequality, the maximizer is a rank-$1$ operator.
None of the papers contained in this significant body of work addresses the
fixed spectrum comparison problem for higher rank density operators. 

The main goal of this paper is to fill this gap in the mathematical theory
developed for higher rank $SU(2)$ coherent states: for each given density
operator $\rho $, we will determine the corresponding minimizer state, in
terms of the Husimi function average involved in Lieb-Solovej inequality (%
\ref{eq:LShusimi}). In the process, we prove a fixed trace majorization
theorem for the Lieb and Solovej amplification channel \cite[Theorems~2.1]%
{LiebSolovej}, and obtain, as by-product of Husimi maximization, the sharp
isoperimetric inequality for the Ky Fan norm of Toeplitz operators, which
automatically implies sharp isoperimetric inequalities for all unitarily
invariant matrix norms, in particular for Schatten and quasi Schatten norms
for every $0<p\leq \infty $.

\subsection{Main results}

Our results deal with the case of a general rank-$r$ operator, and the
arguments start by developing a fixed spectrum output majorization theory
for the Lieb-Solovej amplification channel, which we state as an independent
result, Theorem \ref{thm:iterated}. This retains the flat rank-$r$ spectrum
and controls every accumulated eigenvalue sum. The most significant result
in this direction is Theorem \ref{thm:sphericaldepalma}. It preserves the
full spectrum of the density operator and therefore supplies information
unavailable from the rank-one coherent-state inequality. To our knowledge,
this is the first higher-rank result that identifies a minimizer under a
prescribed-spectrum constraint: within every fixed spectral class, the
passive state is a minimizer. In particular, it retains the flat rank-$r$
spectrum and this will be combined with the Ky-Fan maximum principle for
Toeplitz operators $T_{\Omega }$ of density functions with a flat symbol $%
1_{\Omega }$, in the proof of our second main result, which controls every
accumulated eigenvalue sum for $T_{\Omega }$, therefore leading to a Ky Fan
isoperimetric inequality which automatically implies isoperimetric
inequalities for all unitarily invariant norms, in particular, for all
Schatten norms and quasi-norms of $T_{\Omega }$. Before presenting our main
results we introduce the concept of $SU(2)$ passive rearrangements.

\begin{definition}
Consider a density operator $\rho $ in $\mathcal{P}_{N}$, with $Rank(\rho )=r
$, and its Hilbert-Schmidt decomposition as a convex linear combination of
the rank-$1$ projections over its eigenfunctions $F_{n}\in \mathcal{P}_{N}$,
and its eigenvalues $p_{n}$: 
\begin{equation*}
\rho =\sum_{n=0}^{r-1}p_{n}|F_{n}\rangle \langle F_{n}|\qquad
\sum_{n=0}^{r-1}p_{n}=1.
\end{equation*}%
Write the eigenvalues of $\rho $ in non-increasing order, $p_{0}\geq
p_{1}\geq \cdots \geq p_{r-1}\geq 0,$ and let $e_{0},e_{1},\ldots ,e_{N}$ be
the normalized monomial basis defined by (\ref{monomialBasis}). \emph{The }$%
SU(2)$\emph{\ passive rearrangement} of $\rho $ is 
\begin{equation*}
\rho ^{\downarrow }=\sum_{n=0}^{r-1}p_{n}|e_{n}\rangle \langle e_{n}|.
\end{equation*}%
A density operator is passive if it is equal to its passive rearrangement.
\end{definition}

Our central result is the following fixed spectrum version of Lieb-Solovej
inequality (\ref{eq:LShusimi}).

\begin{theorem}
\label{thm:sphericaldepalma} Let $\rho $ be a density operator on $\mathcal{P%
}_{N}$, and let $\rho ^{\downarrow }$ be its passive rearrangement. Then,
for every convex function $\Phi :[0,1]\rightarrow \mathbb{R}$, 
\begin{equation}
\int_{\mathbb{C}}\Phi (Q_{\rho }(z))\,dm(z)\leq \int_{\mathbb{C}}\Phi
(Q_{\rho ^{\downarrow }}(z))\,dm(z).  \label{eq:sphericaldepalma}
\end{equation}
\end{theorem}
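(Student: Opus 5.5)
Write $\rho=\sum_{n}p_n|F_n\rangle\langle F_n|$ with $p_0\ge p_1\ge\cdots\ge p_N\ge 0$, padding with zero eigenvalues so that $\{F_n\}$ is an orthonormal basis of $\mathcal{P}_N$. Abel summation gives
\begin{equation*}
\rho=\sum_{k=1}^{N+1}(p_{k-1}-p_k)\,P_k,\qquad p_{N+1}:=0,
\end{equation*}
where $P_k$ is the orthogonal projection onto $\mathrm{span}\{F_0,\dots,F_{k-1}\}$. Similarly $\rho^{\downarrow}=\sum_k(p_{k-1}-p_k)P_k^{\downarrow}$, with $P_k^{\downarrow}$ the projection onto $\mathcal{P}_{k-1}=\mathrm{span}\{e_0,\dots,e_{k-1}\}$. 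The Husimi function is $Q_\rho(z)=\langle\rho\,\kappa_{N,z},\kappa_{N,z}\rangle$, which is linear in $\rho$.

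The plan is to reduce \eqref{eq:sphericaldepalma} to a majorization statement for distribution functions. By the standard Hardy--Littlewood--P\'olya/Karamata argument on the probability space $(\mathbb{C},dm)$, the inequality holds for every convex $\Phi$ once two facts are known:
\begin{itemize}
\item $\int Q_\rho\,dm=\int Q_{\rho^\downarrow}\,dm$. This is immediate from \eqref{eq:resolution}: both integrals equal $\mathrm{Tr}(\rho)/(N+1)$.
\item For every $t\in(0,1]$,
\begin{equation*}
\sup_{m(E)=t}\int_E Q_\rho\,dm\;\le\;\sup_{m(E)=t}\int_E Q_{\rho^{\downarrow}}\,dm .
\end{equation*}
That is, $Q_\rho^*\prec Q_{\rho^\downarrow}^*$ in the sense of decreasing rearrangements.
\end{itemize}
Now $\int_E Q_\rho\,dm=\frac{1}{N+1}\mathrm{Tr}(\rho\,T_E)$, where $T_E=(N+1)\int_E|\kappa_{N,z}\rangle\langle\kappa_{N,z}|\,dm$ is the Toeplitz operator with symbol $1_E$. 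Splitting $\rho$ into the layers above and applying Ky Fan's maximum principle gives
\begin{equation*}
\mathrm{Tr}(\rho T_E)=\sum_k(p_{k-1}-p_k)\,\mathrm{Tr}(P_kT_E)\le\sum_k(p_{k-1}-p_k)\sum_{j=1}^{k}\lambda_j(T_E).
\end{equation*}
Note that $\rho^\downarrow$ is diagonal in the monomial basis and rotation invariant, so $Q_{\rho^\downarrow}$ is radial and decreasing in $|z|$. Hence its superlevel sets are discs $D_t=\{|z|<R_t\}$, i.e.\ spherical caps at the north pole. On a cap, $T_{D_t}$ is diagonal in $\{e_n\}$ with eigenvalues decreasing in $n$, so the right-hand side at $E=D_t$ is exactly $(N+1)\int_{D_t}Q_{\rho^\downarrow}\,dm$.

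It therefore suffices to prove the Ky Fan cap inequality: for every measurable $E$ with $m(E)=t$ and every $k$,
\begin{equation*}
\sum_{j=1}^{k}\lambda_j(T_E)\le\sum_{j=1}^k\lambda_j(T_{D_t})=(N+1)\int_{D_t}Q_{P_k^{\downarrow}}\,dm .
\end{equation*}
By Ky Fan again, this is equivalent to
\begin{equation*}
\int_E Q_P\,dm\le\int_{D_t}Q_{P_k^\downarrow}\,dm\qquad\text{for every rank-}k\text{ projection }P.
\end{equation*}
In other words, the theorem reduces to the flat rank-$k$ case, uniformly in $k$.

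This flat-spectrum case is the main obstacle; I would obtain it through the Lieb--Solovej channel and Theorem \ref{thm:iterated}. The idea is to realize $Q_{P}$, for a rank-$k$ projection $P$ on $\mathcal{P}_N$, as $\frac{k}{N+1}$ times a coherent-state–type function on the larger space $\mathcal{P}_{N+k-1}$. This would come from embedding $P/k$ via the antisymmetric/symmetric $SU(2)$ decomposition, as in Lieb--Solovej's proof, so that the channel output $\mathcal{C}(P/k)$ becomes a density on a higher spin whose Husimi function reproduces $Q_P$. Theorem \ref{thm:iterated} says that this output is majorized by the output of $P_k^\downarrow/k$. Combining this with the Lieb--Solovej bathtub argument (a majorized output gives larger integrals over sets of given measure, the extremal set being a cap) should yield the displayed inequality, with equality for caps and $P=P_k^\downarrow$.

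As a consistency check, the right-hand side should evaluate explicitly to the binomial expression stated in the abstract. I would verify this via the incomplete beta identity for $\int_{|z|<R}|e_n(z)|^2(1+|z|^2)^{-N}\,dm$. Once the flat case holds for all $k$, the layer-cake sum above gives the majorization $Q_\rho^*\prec Q_{\rho^\downarrow}^*$, and hence the theorem.
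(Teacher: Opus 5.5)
\paragraph{A gap in the flat case.} Your reduction is sound. The layer-cake decomposition $\rho=\sum_k(p_{k-1}-p_k)P_k$, Ky Fan's principle, the diagonalization of $T_{D_t}$ in the monomial basis, and the Hardy--Littlewood--P\'olya argument together show that the statement follows from the flat case $\sum_{j\le k}\lambda_j(T_E)\le\sum_{j\le k}\lambda_j(T_{D_t})$ for all $k$.

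That flat case is exactly Theorem~\ref{thm:sphericalkyfan}, which the paper \emph{deduces from} the statement you are proving. Combined with your reduction, it is equivalent to that statement. So the reduction moves all of the difficulty into the step you leave at ``should yield'', and that step is not established.

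The mechanism you sketch does not work as described:
\begin{itemize}
\item The channel $\mathcal{C}_{N\to M}$ does not single out $M=N+k-1$, and it produces no coherent-state-type object. For every $M$ it just rescales the lower symbol, $Q_{\mathcal{C}_{N\to M}(X)}=\frac{N+1}{M+1}Q_X$, and the output of $P/k$ is a mixed state on $\mathcal{P}_M$.
\item Output majorization at a fixed $M$ does not give the sharp set inequality. Using $(N+1)\int_E Q_\rho\,dm=\func{Tr}\bigl(\mathcal{C}_{N\to M}(\rho)\,\mathcal{T}_M(1_E)\bigr)$, $0\le\mathcal{T}_M(1_E)\le I$ and $\func{Tr}\,\mathcal{T}_M(1_E)=(M+1)t$, you only get a bound by the interpolated Ky Fan sum of $\mathcal{C}_{N\to M}(\rho^\downarrow)$ of order $(M+1)t$.
\item This bound overshoots $(N+1)\int_{D_t}Q_{\rho^\downarrow}\,dm$ at every finite $M$. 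For $N=1$, $\rho^\downarrow=|e_0\rangle\langle e_0|$ and integer $(M+1)t$, the excess is $t(1-t)/M$.
\end{itemize}

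\paragraph{The missing idea: the limit $M\to\infty$.} The paper supplies it in Lemma~\ref{lem:classical}. The operator $X_M(\rho)=\frac{M+1}{N+1}\mathcal{C}_{N\to M}(\rho)$ satisfies $0\le X_M(\rho)\le I$ and $Q_{X_M(\rho)}=Q_\rho$. The Berezin eigenvalues $\beta_{M,\ell}\to1$ and a second-moment estimate then show that the empirical eigenvalue distribution of $X_M(\rho)$ converges to the distribution of $Q_\rho$ under $dm$. Karamata applied to $X_M(\rho)\prec X_M(\rho^\downarrow)$ then gives the theorem at once for arbitrary spectrum: first for continuous $\Phi$, then by approximation. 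No reduction to flat spectra is needed.

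If you prefer to keep your set-integral formulation, you can close it yourself:
\begin{itemize}
\item $\tau=\mathcal{C}_{N\to M}(\rho^\downarrow)$ is diagonal in the monomial basis, with decreasing entries $\tau_j\le\frac{N+1}{M+1}$.
\item $\mathcal{T}_M(1_{D_t})$ is diagonal with entries $\Pr(\mathrm{Bin}(M+1,t)\ge j+1)$.
\item Hence the gap between the Ky Fan bound and $\func{Tr}(\tau\,\mathcal{T}_M(1_{D_t}))=(N+1)\int_{D_t}Q_{\rho^\downarrow}\,dm$ is at most $\frac{N+1}{M+1}\bigl(\sqrt{(M+1)t(1-t)}+1\bigr)\to0$.
\end{itemize}
This proves $\int_EQ_\rho\,dm\le\int_{D_t}Q_{\rho^\downarrow}\,dm$ for every $\rho$ directly, again without the flat-case detour.
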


This compares $\rho $ with an operator with the same spectrum, changing only
its eigenvectors. In particular, the operator whose spectrum is (\ref%
{spectumInputLS}) is compared with a state with the same spectrum. Define
the Bloch Wehrl entropy by 
\begin{equation}
S_{W}(\rho )=-(N+1)\int_{\mathbb{C}}Q_{\rho }(z)\log Q_{\rho }(z)\,dm(z),
\label{eq:BlochWehrlEntropy}
\end{equation}%
where $0\log 0$ is understood to be zero. Applying the concave form of
Theorem \ref{thm:sphericaldepalma} to the concave function $\Phi (t)=-t\log
t,$with $\Phi (0)=0$ leads to the fixed spectrum minimization for the Wehrl
entropy of Bloch states.

\begin{corollary}
Let $\rho $ be a density operator on $\mathcal{P}_{N}$, and let $\rho
^{\downarrow }$ be obtained by placing the eigenvalues of $\rho $ in
decreasing order along the monomial basis. Then 
\begin{equation}
S_{W}(\rho )\geq S_{W}(\rho ^{\downarrow }).  \label{eq:fixedspectrumWehrl}
\end{equation}%
Consequently, among all density operators on $\mathcal{P}_{N}$ having the
same spectrum as $\rho $, the passive state $\rho ^{\downarrow }$ is a
minimizer of the Bloch Wehrl entropy.
\end{corollary}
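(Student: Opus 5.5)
The plan is to apply Theorem~\ref{thm:sphericaldepalma} directly to the convex function $\Phi(t)=t\log t$, which is the negative of the concave function in the definition (\ref{eq:BlochWehrlEntropy}) of the Bloch Wehrl entropy.

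First I will check that the Husimi functions take values in $[0,1]$, so that composing with $\Phi$ is legitimate. For any density operator $\sigma$ on $\mathcal{P}_N$ and any normalized kernel,
\begin{equation*}
Q_\sigma(z)=\langle \sigma\kappa_{N,z},\kappa_{N,z}\rangle_{\mathcal{P}_N}\in[0,\|\sigma\|]\subseteq[0,1],
\end{equation*}
because $\sigma\geq 0$, $\|\kappa_{N,z}\|=1$ and $\|\sigma\|\leq \func{Tr}(\sigma)=1$. This applies to both $\rho$ and $\rho^{\downarrow}$.

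Next I will verify the hypotheses on $\Phi$. With the convention $0\log 0=0$, the function $\Phi(t)=t\log t$ is continuous on $[0,1]$ and satisfies $\Phi''(t)=1/t>0$ on $(0,1]$, so it is convex on $[0,1]$. Since $\Phi$ is bounded on $[0,1]$ and $m$ is a probability measure, both integrals are finite. Theorem~\ref{thm:sphericaldepalma} then gives
\begin{equation*}
\int_{\mathbb{C}}Q_\rho\log Q_\rho\,dm\leq\int_{\mathbb{C}}Q_{\rho^{\downarrow}}\log Q_{\rho^{\downarrow}}\,dm.
\end{equation*}
Multiplying both sides by $-(N+1)<0$ reverses the inequality and yields (\ref{eq:fixedspectrumWehrl}).

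For the minimization statement, let $\sigma$ be any density operator with the same spectrum as $\rho$, counted with multiplicity. The passive rearrangement depends only on the non-increasing list of eigenvalues, so $\sigma^{\downarrow}=\rho^{\downarrow}$. This holds even when some eigenvalues coincide, because the eigenvectors of $\sigma$ play no role in the construction. Applying (\ref{eq:fixedspectrumWehrl}) to $\sigma$ gives $S_W(\sigma)\geq S_W(\rho^{\downarrow})$. Moreover, $\rho^{\downarrow}$ itself has the same spectrum as $\rho$: its eigenvalues are the $p_n$ on the orthonormal vectors $e_n$, together with zeros on the remaining basis vectors. Hence $\rho^{\downarrow}$ lies in the spectral class and attains the minimum.

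No step here is a real obstacle, since all the analytic content is contained in Theorem~\ref{thm:sphericaldepalma}. The only points needing care are the range $[0,1]$ of the Husimi function and the continuity of $\Phi$ at $0$, both of which are handled above.
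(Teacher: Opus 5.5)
Your proof is correct and is essentially the paper's argument: the paper applies the concave form of Theorem~\ref{thm:sphericaldepalma} to $\Phi(t)=-t\log t$ with $\Phi(0)=0$, which is equivalent to your application of the convex form to $t\log t$ followed by multiplication by $-(N+1)$. Your additional checks make explicit what the paper leaves implicit: that $0\leq Q_\sigma\leq 1$, that $\Phi$ is continuous at $0$, and that $\sigma^{\downarrow}=\rho^{\downarrow}$ for every $\sigma$ in the same spectral class.
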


The proof involves the following fixed spectrum majorization theorem for the
Lieb-Solovej iterated channel, which is defined in Section 2 in a slightly
different form. The result may have independent interest and is stated here
as a Theorem.

\begin{theorem}
\label{thm:iterated} For every density operator $\rho $ on $\mathcal{P}_{N}$
and every $M\geq N$, 
\begin{equation}
\mathcal{C}_{N\rightarrow M}(\rho )\prec \mathcal{C}_{N\rightarrow M}(\rho
^{\downarrow }).  \label{eq:iteratedmajorization}
\end{equation}%
The state on the right is passive in $\mathcal{P}_{M}$. 
\end{theorem}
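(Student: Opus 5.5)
The plan is to reduce the majorization \eqref{eq:iteratedmajorization} to a statement about flat (projection) inputs, and then handle that flat case directly. Only three properties of $\mathcal{C}_{N\rightarrow M}$ from Section~2 will be used: it is linear, trace preserving and completely positive; it is $SU(2)$-equivariant; and it maps $|e_{n}^{(N)}\rangle\langle e_{n}^{(N)}|$ to an operator diagonal in the monomial basis of $\mathcal{P}_{M}$.

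The passivity claim comes first. Diagonality gives that $\mathcal{C}_{N\rightarrow M}(\rho^{\downarrow})$ is diagonal in $\{e_{m}^{(M)}\}$. That its diagonal entries are non-increasing in $m$ should follow from the explicit binomial coefficients of the channel on monomials, combined with $p_{0}\geq p_{1}\geq\cdots$.

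For the majorization, write $\mu_{j}(A)$ for the eigenvalues in decreasing order. By Ky Fan's maximum principle, for each $k$,
\begin{equation*}
\sum_{j<k}\mu_{j}(\mathcal{C}_{N\rightarrow M}(\rho))=\max_{P}\func{Tr}\bigl(P\,\mathcal{C}_{N\rightarrow M}(\rho)\bigr)=\max_{P}\func{Tr}\bigl(\mathcal{C}^{*}_{N\rightarrow M}(P)\,\rho\bigr),
\end{equation*}
where $P$ ranges over rank-$k$ projections on $\mathcal{P}_{M}$. Von Neumann's trace inequality bounds the last trace by $\sum_{j}p_{j}\mu_{j}(\mathcal{C}^{*}(P))$. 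Since $p_{j}$ is non-increasing, Abel summation rewrites this as a non-negative combination, with coefficients $(p_{l-1}-p_{l})\geq 0$, of the partial sums $\sum_{j<l}\mu_{j}(\mathcal{C}^{*}(P))$. A second application of Ky Fan identifies each partial sum with $\max_{Q}\func{Tr}(P\,\mathcal{C}_{N\rightarrow M}(Q))$ over rank-$l$ projections $Q$ on $\mathcal{P}_{N}$.

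It therefore suffices to prove the following \emph{flat bilinear lemma}. For all $k,l$, the quantity $\func{Tr}(P\,\mathcal{C}_{N\rightarrow M}(Q))$, taken over rank-$k$ projections $P$ on $\mathcal{P}_{M}$ and rank-$l$ projections $Q$ on $\mathcal{P}_{N}$, is maximized at
\begin{equation*}
P=P_{k}^{(M)}=\sum_{m<k}|e_{m}^{(M)}\rangle\langle e_{m}^{(M)}|,\qquad Q=P_{l}^{(N)}.
\end{equation*}
Granting the lemma, the chain of inequalities is dominated by $\sum_{l}(p_{l-1}-p_{l})\func{Tr}(P_{k}^{(M)}\mathcal{C}(P_{l}^{(N)}))$. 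Undoing the Abel summation turns this into $\func{Tr}(P_{k}^{(M)}\mathcal{C}(\rho^{\downarrow}))$. Because the output of $\rho^{\downarrow}$ is diagonal and decreasing, this equals the $k$-th Ky Fan sum of $\mathcal{C}(\rho^{\downarrow})$. Equality of traces then gives the majorization. This lemma is exactly the flat rank-$r$ Ky Fan control advertised in the introduction.

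The flat lemma is the main obstacle. My first attempt would reduce to one step $N\rightarrow N+1$. Here $\mathcal{P}_{N+1}$ is viewed as the symmetric part of $\mathcal{P}_{N}\otimes\mathcal{P}_{1}$, with the channel symmetrizing $Q\otimes|\kappa\rangle\langle\kappa|$. The multi-step lemma would then follow by induction, since the inductive output $\mathcal{C}(P_{l})$ is passive and the reduction above applies again. In the one-step case the trace $\func{Tr}(P\,\mathcal{C}(Q))$ can be written as an explicit quadratic expression in the subspaces $V=\operatorname{ran}Q$ and $W=\operatorname{ran}P$. It involves $V$ and $wV$, together with the derivative structure coming from the symmetrizer.

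To show that monomial flags are optimal, I would use $SU(2)$-equivariance to move to the diagonal torus $e^{tH}$, where $H$ is the number operator. Two routes seem available. One is to show that the Grassmannian limit $t\rightarrow\infty$ of $e^{tH}V$, which is the span of the leading monomials of $V$, does not decrease the trace. The other is a compression or polarization argument that replaces $V$ by its monomial initial subspace. If neither monotonicity can be verified directly, the fallback is to argue at a critical point of this compact maximization problem. The first-order conditions under the $\mathfrak{su}(2)$ action should force $V$ and $W$ to be joint eigenspaces of $H$ after a rotation. The maximizer could then be selected among monomial subspaces by the explicit binomial formula.
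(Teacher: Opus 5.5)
Your reduction is correct, and it is the same mechanism the paper uses inside the proof of Theorem~\ref{thm:onestep}: Ky Fan, duality, von Neumann, and Abel summation against the decreasing $p_j$ show that the theorem for all $\rho$ follows from your flat bilinear lemma. The gap is that this lemma, which you yourself call the main obstacle, is never proved. In general it is just the theorem restricted to flat inputs $Q/l$, so the reduction has not yet removed any difficulty.

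None of the routes you list is carried out, and the critical-point fallback fails as stated. The quantity $\func{Tr}(P_W\,\mathcal{C}(P_V))$ is invariant under simultaneous $SU(2)$ rotation of $(V,W)$, so the first-order conditions along $\mathfrak{su}(2)$ hold identically and carry no information. The genuine stationarity conditions over the Grassmannians only say that $W$ is spanned by top eigenvectors of $\mathcal{C}(P_V)$ and $V$ by top eigenvectors of $\mathcal{C}^{*}(P_W)$. That does not single out monomial flags. The degeneration along $e^{tH}$ is also unsupported: you give no reason for the trace to be monotone along it, and the channel is equivariant only under $SU(2)$, not under this non-unitary flow.

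The missing idea is that at one step no geometry is needed. Since $\mathcal{C}_N(I)=\frac{N+1}{N+2}I$ and $\mathcal{C}_N^{*}$ is positive and unital, a rank-$k$ projection $P$ on $\mathcal{P}_{N+1}$ satisfies $0\leq\mathcal{C}_N^{*}(P)\leq I$ and $\func{Tr}\,\mathcal{C}_N^{*}(P)=k\frac{N+1}{N+2}=k-1+c_k$. Hence for every rank-$l$ projection $Q$,
\[
\func{Tr}\bigl(P\,\mathcal{C}_N(Q)\bigr)=\func{Tr}\bigl(Q\,\mathcal{C}_N^{*}(P)\bigr)\leq\min\{l,\;k-1+c_k\},
\]
and the Kraus formulas give equality at $P_k^{(N+1)},P_l^{(N)}$. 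This is your flat lemma for one step, and it is exactly the content of Theorem~\ref{thm:onestep}.

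To iterate you need one more ingredient, which ``the reduction applies again'' does not supply. From $\sigma=\mathcal{C}_{N\to M}(\rho)\prec\tau=\mathcal{C}_{N\to M}(\rho^{\downarrow})$, the one-step result only gives $\mathcal{C}_M(\sigma)\prec\mathcal{C}_M(\sigma^{\downarrow})$. Passing to $\mathcal{C}_M(\tau)$ requires that passive outputs are monotone under majorization of passive inputs (Lemma~\ref{lem:passiveorder}). This holds because the increments of $l\mapsto\min\{l,k-1+c_k\}$ are $1,\ldots,1,c_k,0,\ldots,0$, which are nonincreasing. So the $k$-th output partial sum $\sum_{j\le k-2}p_j+c_kp_{k-1}$ is a convex combination of two consecutive input partial sums.

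Passivity of $\mathcal{C}_{N\to M}(\rho^{\downarrow})$ should also be done one step at a time, via $(N+2)(q_j-q_{j+1})=j(p_{j-1}-p_j)+(N-j)(p_j-p_{j+1})\geq0$, rather than from the multi-step binomial coefficients.
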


Finally, we will deduce from Theorem \ref{thm:sphericaldepalma} an
isoperimetric inequality for the Ky Fan norm of the Toeplitz operator $%
T_{\Omega }$ defined as 
\begin{equation*}
T_{\Omega }=(N+1)\int_{\Omega }|\kappa _{N,z}\rangle \langle \kappa
_{N,z}|\,dm(z)\text{.}
\end{equation*}%
This is our final main result. It provides a spherical analogue of the
inequality for the Fock space, recently obtained by the author \cite%
{AbreuKyFan}, after a conjecture in \cite%
{NicolaRiccardiTilli2025,NicolaRiccardiTilli2026}, where the inequality was
proved with radial assumptions on $\Omega $ and the existence of maximizers
proved for the unrestricted case. In the spherical setting considered in
this paper, the Ky Fan isoperimetric Theorem reads as follows.

\begin{theorem}
\label{thm:sphericalkyfan} Let $\Omega $ be a measurable subset of the
sphere and let $D_{\alpha }$ be a spherical cap with 
\begin{equation*}
m(\Omega )=m(D_{\alpha })=\alpha .
\end{equation*}%
Then, for every $1\leq r\leq N+1$, 
\begin{equation}
\sum_{j=1}^{r}\lambda _{j}(\Omega )\leq \sum_{j=1}^{r}\lambda _{j}(D_{\alpha
})\text{.}  \label{eq:kyfanmain}
\end{equation}%
Equivalently,%
\begin{equation*}
\sum_{j=1}^{r}\lambda _{j}(\Omega )\leq r-\sum_{k=0}^{r-1}(r-k)\binom{N+1}{k}%
m(\Omega )^{k}(1-m(\Omega ))^{N+1-k}\text{.}
\end{equation*}%
The right side is attained by the passive projection onto $\func{span}%
\{e_{0},\ldots ,e_{r-1}\}$, after a spherical rotation placing its pole at
the center of the cap.
\end{theorem}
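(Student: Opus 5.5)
Via Ky Fan's maximum principle, the sum of the top $r$ eigenvalues of $T_\Omega$ is a supremum of $\operatorname{Tr}(PT_\Omega)$ over rank-$r$ projections; each such trace is an integral of a Husimi function over $\Omega$, and Theorem \ref{thm:sphericaldepalma} shows this integral is largest when $\Omega$ is a cap and $P$ is passive. The steps are as follows.

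\textbf{Step 1 (variational reduction).} By Ky Fan's maximum principle,
\begin{equation*}
\sum_{j=1}^{r}\lambda_j(\Omega)=\max_{P}\operatorname{Tr}(PT_\Omega),
\end{equation*}
where $P$ ranges over orthogonal projections of rank $r$. Put $\rho=P/r$. From the definition of $T_\Omega$,
\begin{equation*}
\operatorname{Tr}(PT_\Omega)=r(N+1)\int_\Omega \langle \rho\,\kappa_{N,z},\kappa_{N,z}\rangle\,dm(z)=r(N+1)\int_\Omega Q_\rho(z)\,dm(z),
\end{equation*}
taking the normalization $Q_\rho(z)=\langle\rho\kappa_{N,z},\kappa_{N,z}\rangle$; other conventions only change constants. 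It therefore suffices to bound $\int_\Omega Q_\rho\,dm$ for each $\rho$ of flat rank-$r$ spectrum.

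\textbf{Step 2 (bathtub bound via distribution functions).} Let $\mu_\rho(t)=m(\{Q_\rho>t\})$. The bathtub principle gives
\begin{equation*}
\int_\Omega Q_\rho\,dm\le \int_0^{\alpha}Q_\rho^*(s)\,ds,
\end{equation*}
where $Q_\rho^*$ is the decreasing rearrangement on $[0,1]$. Next apply Theorem \ref{thm:sphericaldepalma} with the convex functions $\Phi_t(x)=(x-t)_+$ for $t\in[0,1]$. This shows $\int (Q_\rho-t)_+\,dm\le\int(Q_{\rho^\downarrow}-t)_+\,dm$ for all $t$. Both functions have the same total integral $1/(N+1)$ by the resolution of the identity (\ref{eq:resolution}). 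Hence $Q_\rho^*\prec Q_{\rho^\downarrow}^*$ in the Hardy--Littlewood sense, which gives
\begin{equation*}
\int_0^\alpha Q_\rho^*\le\int_0^\alpha Q_{\rho^\downarrow}^*\quad\text{for every }\alpha .
\end{equation*}
Here $\rho^\downarrow$ is the projection onto $\operatorname{span}\{e_0,\dots,e_{r-1}\}$ divided by $r$.

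\textbf{Step 3 (identifying the extremal set).} We have
\begin{equation*}
Q_{\rho^\downarrow}(z)=\frac1r\sum_{n<r}\binom Nn\frac{|z|^{2n}}{(1+|z|^2)^N},
\end{equation*}
which is radial in $z$. Writing $s=|z|^2/(1+|z|^2)$, this equals $\frac1r\sum_{n<r}\binom Nn s^n(1-s)^{N-n}$, a binomial distribution function, which is decreasing in $s$. Its superlevel sets are therefore discs $\{|z|<R\}$, i.e.\ spherical caps around the pole $z=0$. Consequently $\int_0^\alpha Q^*_{\rho^\downarrow}=\int_{D_\alpha}Q_{\rho^\downarrow}\,dm$ with $D_\alpha$ centred at $0$. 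This value is at most $\sum_{j\le r}\lambda_j(D_\alpha)/(r(N+1))$, again by Ky Fan. Rotation invariance, via $SU(2)$ covariance of $T_\Omega$, lets us place the cap anywhere, which proves (\ref{eq:kyfanmain}).

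Equality also holds in the last step, because $T_{D_\alpha}$ is diagonal in the monomial basis. Its eigenvalues $\lambda_n(D_\alpha)=(N+1)\int_{D_\alpha}\binom Nn|z|^{2n}(1+|z|^2)^{-N}dm$ decrease in $n$ for caps centred at $0$, so $P_r$ is a Ky Fan maximizer. This gives the attainment claim.

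\textbf{Step 4 (explicit formula).} Compute $\lambda_n(D_\alpha)$. In the variable $s$, $dm$ becomes $ds$ on $[0,1)$, and the cap is $s<\alpha$. Hence $\lambda_n=(N+1)\binom Nn\int_0^\alpha s^n(1-s)^{N-n}ds$, which is the probability that a $\mathrm{Bin}(N+1,\alpha)$ variable is at least $n+1$. Summing over $n<r$,
\begin{equation*}
\sum_{n<r}P(X\ge n+1)=\sum_{n<r}\bigl(1-P(X\le n)\bigr)=r-\sum_{k=0}^{r-1}(r-k)\binom{N+1}{k}\alpha^k(1-\alpha)^{N+1-k},
\end{equation*}
which is the stated formula.

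The main obstacle is Step 2: passing from the convex-function inequality of Theorem \ref{thm:sphericaldepalma} to the partial-integral bound. This needs the equal total masses and the standard equivalence between majorization and the inequalities for $(x-t)_+$. I would check carefully that this equivalence holds in the continuous, probability-measure setting.
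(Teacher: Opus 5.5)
Your proposal is correct and follows essentially the same route as the paper: the Ky Fan reduction to $\int_\Omega Q_{P/r}\,dm$, Theorem~\ref{thm:sphericaldepalma} applied to the hinge functions $(x-t)_+$, a bathtub argument, the radial monotonicity of $Q_{\rho^\downarrow}$, and the monomial diagonalization of $T_{D_\alpha}$ with the binomial-tail formula. The step you flag as the main obstacle is settled in the paper without rearrangement machinery or the equal-mass condition: fixing the single level $a$ with $D_\alpha=\{Q_{\rho^\downarrow}>a\}$ gives $\int_\Omega Q_\rho\,dm\le \alpha a+\int(Q_\rho-a)_+\,dm\le \alpha a+\int(Q_{\rho^\downarrow}-a)_+\,dm=\int_{D_\alpha}Q_{\rho^\downarrow}\,dm$ directly, and the case $r=N+1$, where $Q_{\rho^\downarrow}$ is constant, is treated separately via the trace.
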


Theorem \ref{thm:sphericalkyfan} provides isoperimetric inequalities for all
Schatten norms of the operator $T_{\Omega }$. We proceed with the
definitions of Schatten norms and quasi norms. For $0<p<\infty $, the
Schatten $p$-sum of a positive compact operator $T$ is 
\begin{equation*}
\Vert T\Vert _{S^{p}}=\left( \sum_{j=1}^{N+1}\lambda _{j}(T)^{p}\right)
^{1/p}.
\end{equation*}%
For $p\geq 1$, $\Vert T\Vert _{S^{p}}$ is a norm and for $0<p<1$, $\Vert
T\Vert _{S^{p}}$ is a quasinorm. Since the sums are finite, these quantities
are always defined for every operator on $\mathcal{P}_{N}$. We also write $%
\Vert T\Vert _{S^{\infty }}=\lambda _{1}(T)=\Vert T\Vert _{\mathrm{op}}.$

\begin{corollary}
\label{AllNorms}For $1<p\leq \infty $, 
\begin{equation*}
\Vert T_{\Omega }\Vert _{\mathcal{S}^{p}}\leq \Vert T_{D_{\alpha }}\Vert _{%
\mathcal{S}^{p}},
\end{equation*}%
while for $0<p<1$ the corresponding Schatten quasinorm inequality is
reversed. For $p=1$ both sides are $(N+1)\alpha $. More generally, 
\begin{equation*}
\Vert T_{\Omega }\Vert _{\mathfrak{I}}\leq \Vert T_{D_{\alpha }}\Vert _{%
\mathfrak{I}}
\end{equation*}%
for every unitarily invariant matrix norm $\Vert \cdot \Vert _{\mathfrak{I}}$%
. Every Ky Fan norm satisfies the sharper partial-sum inequality (\ref%
{eq:kyfanmain}).
\end{corollary}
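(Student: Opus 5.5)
The plan is to reduce Corollary \ref{AllNorms} to Theorem \ref{thm:sphericalkyfan} through the standard weak majorization machinery (Ky Fan dominance theorem and Karamata/Hardy--Littlewood--P\'olya). Let $\lambda(\Omega)=(\lambda_1(\Omega),\dots,\lambda_{N+1}(\Omega))$ and $\lambda(D_\alpha)$ be the eigenvalue vectors, arranged in decreasing order. These are eigenvalues of positive operators with $0\le T_\Omega\le I$.

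First I compute the traces. By the resolution of the identity (\ref{eq:resolution}), $\func{Tr}(T_\Omega)=(N+1)\int_\Omega \|\kappa_{N,z}\|^2\,dm(z)=(N+1)m(\Omega)=(N+1)\alpha$. The same holds for $D_\alpha$, which settles the case $p=1$. Together with (\ref{eq:kyfanmain}) for $r\le N$, the equality at $r=N+1$ shows that $\lambda(\Omega)\prec\lambda(D_\alpha)$ is a genuine majorization, not merely a weak one.

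Next I apply Karamata's inequality: for every convex $g$ on $[0,1]$, $\sum_j g(\lambda_j(\Omega))\le\sum_j g(\lambda_j(D_\alpha))$, and the inequality reverses for concave $g$.
\begin{itemize}
\item For $1<p<\infty$, I take $g(t)=t^p$ and then take $p$-th roots.
\item For $0<p<1$, $t^p$ is concave, so the sums reverse, and taking the $1/p$ power preserves the order.
\item For $p=\infty$, the claim is the case $r=1$ of (\ref{eq:kyfanmain}).
\end{itemize}

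Finally, for a general unitarily invariant norm, I invoke the Ky Fan dominance theorem. If $\|A\|_{(k)}\le\|B\|_{(k)}$ for all Ky Fan $k$-norms, then $\|A\|_{\mathfrak I}\le\|B\|_{\mathfrak I}$ for every unitarily invariant norm. The Ky Fan norm of the positive operator $T_\Omega$ is exactly $\sum_{j\le k}\lambda_j(\Omega)$, so the hypothesis is precisely (\ref{eq:kyfanmain}).

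Once Theorem \ref{thm:sphericalkyfan} is available there is no real obstacle. The only points that need care are that the singular values of $T_\Omega$ coincide with its eigenvalues (by positivity) and that the trace equality is exact, which is what allows the concave reversal for $0<p<1$.
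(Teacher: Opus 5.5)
Your proposal is correct and follows essentially the same route as the paper. Like the paper, you use the Ky Fan dominance theorem for general unitarily invariant norms, with singular values equal to eigenvalues by positivity. You apply Karamata's inequality to $t^p$, using the exact trace equality $\mathrm{Tr}\,T_\Omega=\mathrm{Tr}\,T_{D_\alpha}=(N+1)\alpha$ to obtain the reversal for $0<p<1$. The case $p=\infty$ is the case $r=1$ of the Ky Fan inequality.
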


In contrast with the situation in the Fock space, where majorization theory
and the Karamata-type inequalities from \cite{NicolaRiccardiTilli2025} have
been used to deduce the Schatten inequalities from the Ky Fan isoperimetric
inequality, the above Corollary follows automatically from Theorem \ref%
{thm:sphericalkyfan}. Indeed, since the dimension of the ambient space is
finite, the Toeplitz operators have finite rank. Thus, all we need is the Ky
Fan dominance Theorem \cite[{\small Section IV.2, Theorem IV.2.2}]{Bhatia},
which assures that, given two matrices $A,B\in M_{n}(\mathbb{C})$, their
singular values are dominated in the weak sense 
\begin{equation*}
\sum_{j=1}^{r}s_{j}(A)\leq \sum_{j=1}^{r}s_{j}(B)\text{, \ \ \ \ \ \ \ \ \ \
\ \ \ \ \  }r=1,...,N+1\text{.}
\end{equation*}%
if and only if $\Vert A\Vert \leq \Vert B\Vert $ for every unitarily
invariant matrix norm. Since $T_{\Omega }$ is compact and positive, its
eigenvalues coincide with its singular values, and this gives Corollary \ref%
{AllNorms}. The result for $p>1$ follows from a standard application of
Karamata's inequality to the convex function $t^{p}$ and for $0<p<1$ to the
concave function $t^{p}$. For $p=1$ the inequality follows from $\func{Tr}%
(T_{\Omega })=\func{Tr}(T_{D_{\alpha }})=\alpha (N+1)$ and for $p=\infty $
it is the case $r=1$ of (\ref{eq:kyfanmain}).

We observe that \emph{no isoperimetric inequality is being used here, the
spherical-cap geometry controlling the norms follows from the fixed-spectrum majorization}.
The argument based on (\ref{eq:kyfanmain}) together with its proof, explains
why the same circular geometry controls all unitarily invariant norms: \emph{the extremal density operator obtained by these
methods is axially invariant on }$\mathcal{P}_{N}$,\emph{\ while its Husimi
function is invariant under the corresponding rotations of the Bloch sphere. 
}It is\emph{\ }the resulting radial structure that identifies spherical
caps---or stereographic discs---as the extremal sets for every Ky Fan sum.
In the proof of Theorem \ref{thm:sphericalkyfan}, this geometric principle
has an explicit form: the eigenvalues $\lambda _{j}(D_{\alpha })$ are
associated with the eigenfunctions (\ref{monomialBasis}). The radial
structure of the eigenfunctions leads to orthogonality in any spherical cap,
which automatically diagonalizes $T_{\Omega }$

The Hilbert--Schmidt case $p=2$ can also be proved directly from spherical
rearrangement applied to the squared coherent-state kernel, the case $p=1$
corresponds to the trace of the operator and the case $p=\infty $ can be
obtained from Lieb and Solovej and corresponds to the spherical Faber--Krahn
inequality obtained by Frank \cite{Frank} and Kulikov, Nicola, Ortega-Cerd%
\`{a} and Tilli \cite{Frank,KNOT}. Nevertheless, and to complete the attempt
of delimitating the current state of the art in these problems, we emphasize
that our methods do not characterize all the equality cases, as \cite{Frank}
and \cite{Frank,KNOT} do for the rank-$1$ case. 

Our results are inspired and effectively use ideas with diverse origins. The
central concept governing the argument is exactly the same amplification
channel for Bloch states introduced by Lieb and Solovej \cite[Theorems~2.1]%
{LiebSolovej}, which is first introduced in the polynomial setting
described in the introductory paragraphs . A fixed spectrum majorization
theory will be developed specifically for this channel. This is
inspired in fixed output majorization results for Fock states by de Palma,
Trevisan and Giovannetti \cite{DPTG}, leveraged by de Palma \cite%
{DePalma2018} to obtain the Husimi fixed spectrum maximization for Fock
states, under continuity assumptions (see also the follow up \cite{DPTGA},
which includes an extensive topical review). This result was recently
used by the author to prove the isoperimetric inequality for the Ky Fan and
Schatten norms of time-frequency localization operators \cite{AbreuKyFan}.
Theorem \ref{thm:sphericalkyfan} shows that similar results hold for Bloch $SU(2)$ states. 

\subsection{Outline}

The presentation is organized as follows. Section 2 introduces the concepts involved in the main results and their proofs: the first subsection introduces Husimi functions, lower symbols, and the Berezin transform, the second presents an exposition of some useful fundamental concepts from quantum information theory, namely quantum channels, Kraus representations, and Ky Fan maximum principles. This is aimed to make the presentation self-contained for a mathematical audience. The third and fourth subsections introduce the spherical amplification channel in the polynomial space  and make the identification with the Lieb-Solovej channel. In section 3, the
fixed spectrum majorization results for output of channels, of which the most general is
Theorem \ref{thm:iterated}, are proved. Section 4 proves Theorem \ref{thm:sphericaldepalma}. The paper is concluded with
section 5, where Theorem \ref{thm:sphericalkyfan} is demonstrated.

\section{The Husimi function and the spherical amplification channels}

\subsection{Density operators, Husimi functions and lower symbols}

A density operator is a positive trace-class operator $\rho $ with $\func{Tr}%
(\rho )=1$ \cite[Section 2.1]{Petz}. Thus, density operators are compact and
self-adjoint and the spectral theorem for self-adjoint operators applies,
assuring that $\rho $ always has eigenvalues (even in the infinite
dimensional case, which is not required here). Thus, if $\rho $ is a density
operator, the singular values equal its eigenvalues, $0\leq \rho \leq 1$ and
the eigenvalues are all non-negative. Since $\dim \mathcal{P}_{N}=N+1$,
every density operator in $\mathcal{P}_{N}$ satisfies%
\begin{equation*}
1\leq Rank(\rho )\leq N+1\text{.}
\end{equation*}%
Let $\rho $ be a density operator on $\mathcal{P}_{N}$. Its spherical Husimi
function is defined by the action 
\begin{equation}
Q_{\rho }(z)=\langle \kappa _{N,z},\rho \kappa _{N,z}\rangle _{N}
\label{eq:husimi}
\end{equation}%
and always satisfies 
\begin{equation}
0\leq Q_{\rho }\leq 1,\qquad \int_{\mathbb{C}}Q_{\rho }(z)\,dm(z)=\frac{1}{%
N+1}.  \label{eq:husiminormalization}
\end{equation}%
The Husimi function is sometimes normalized as $(N+1)Q_{\rho }$ to define an
ordinary probability density, but the above normalization, where $0\leq
Q_{\rho }\leq 1$, will be more convenient for our purposes. 

For an operator $X\in End(\mathcal{P}_{M})$, define its coherent-state lower
symbol, or covariant symbol (for density operators this becomes the Husimi
function), as%
\begin{equation}
Q_{X}(z)=\left\langle X\kappa _{M,z},\kappa _{M,z}\right\rangle _{\mathcal{P}%
_{M}}  \label{lowersymbol}
\end{equation}

For a bounded function $a$ on the sphere, define the spherical Toeplitz
operator with symbol $a$ on $\mathcal{P}_{M}$ by 
\begin{equation}
\mathcal{T}_{M}(a)=(M+1)\int_{\mathbb{C}}a(z)|\kappa _{M,z}\rangle \langle
\kappa _{M,z}|\,dm(z).  \label{eq:toeplitzquant}
\end{equation}%
Its lower symbol is the Berezin transform 
\begin{equation}
(\mathcal{B}_{M}a)(z)=(M+1)\int_{\mathbb{C}}|\langle \kappa _{M,z},\kappa
_{M,w}\rangle |^{2}a(w)\,dm(w).  \label{eq:berezin}
\end{equation}%
This is the coherent-state quantization associated with the spin
representation \cite{Berezin}.

We will prove several majorization results and use a notation that is
convenient to define precisely. For Hermitian matrices $A$ and $B$ of the
same size, we write $A\prec B$ when the decreasing eigenvalue vector of $A$
is majorized by that of $B$. Thus 
\begin{equation*}
\sum_{j=1}^{s}\lambda _{j}(A)\leq \sum_{j=1}^{s}\lambda _{j}(B)
\end{equation*}%
for every partial sum, with equality for the total sums. We will use the
following convention everywhere in the paper. Whenever an empty sum appears
from setting $s=1$ in 
\begin{equation*}
\sum_{j=0}^{s-2}p_{j}
\end{equation*}%
we declare such empty sum to be $0$.

\subsection{Background on quantum channels and Kraus representations}

We will use some basic concepts and results about quantum channels, which
can be found in \cite{Petz,Trevisan}.\ 

\textbf{Channels}

A channeling transformation is a map $\mathcal{C}:\mathcal{B}\left( \mathcal{%
H}_{1}\right) \mathcal{\rightarrow B(H}_{2}\mathcal{)}$ which transform
density operators from $\mathcal{H}_{1}$ (the input) into density operators
in $\mathcal{H}_{2}$ (the output). In finite dimensions, a \emph{quantum
channel} is a linear map $\Phi :\mathcal{B}(\mathcal{H}_{1})\rightarrow 
\mathcal{B}(\mathcal{H}_{2})$ that is \emph{completely positive and trace
preserving}. Complete positivity means that 
\begin{equation*}
I_{k}\otimes \Phi :M_{k}(\mathbb{C})\otimes \mathcal{B}(\mathcal{H}%
_{1})\rightarrow M_{k}(\mathbb{C})\otimes \mathcal{B}(\mathcal{H}_{2})\text{,%
}
\end{equation*}%
defined on elementary tensors by%
\begin{equation*}
(I_{k}\otimes \Phi )(A\otimes X)=A\otimes \Phi (X)
\end{equation*}%
maps positive operators to positive operators for every $k\geq 1$, where $%
I_{k}$ denotes the identity map on $M_{k}(\mathbb{C})$. This is a stronger
requirement than positivity of $\Phi $, which is worth justifying:
positivity of $\Phi $ is enough when the operator $X$ appears alone, in a
positive product operator $A\otimes X$, or in a sum of positive product
operators. However, a general positive operator $Y$ on the tensor-product
space need not be a sum of positive product operators, and positivity of $%
\Phi $ does not by itself guarantee that%
\begin{equation*}
\left( I_{k}\otimes \Phi \right) \left( Y\right) \geq 0.
\end{equation*}%
Complete positivity requires this inequality to hold for every positive $Y$
in the tensor product space (the composite system). We note in passing that,
after normalization to trace one, positive operators that cannot be written
as sums of positive product operators are precisely the entangled states of
the composite system.

\textbf{Kraus representations}

A Kraus representation is an operator-sum formula 
\begin{equation}
\Phi (X)=\sum_{\ell }K_{\ell }XK_{\ell }^{\ast }\text{,}
\label{eq:kraus-representation}
\end{equation}%
where $K_{\ell }$ are the Kraus operators of the representation. Such a
formula is completely positive and trace preserving if and only if 
\begin{equation}
\sum_{\ell }K_{\ell }^{\ast }K_{\ell }=I_{\mathcal{H}_{1}}.
\label{eq:kraus-trace-preserving}
\end{equation}%
These standard facts, including the operator-sum characterization, are given
in \cite[Theorems~2.1--2.2, pp.~15--16; Section~7.1, pp.~91--92]{Petz}. The
adjoint map $\Phi ^{\ast }:\mathcal{B}(\mathcal{H}_{2})\rightarrow \mathcal{B%
}(\mathcal{H}_{1})$ is determined by trace duality, 
\begin{equation}
\func{Tr}\left( Y\Phi (X)\right) =\func{Tr}\left( \Phi ^{\ast }(Y)X\right) .
\label{eq:adjoint-definition}
\end{equation}%
If (\ref{eq:kraus-representation}) holds, then $\Phi ^{\ast }(Y)=\sum_{\ell
}K_{\ell }^{\ast }YK_{\ell }$. Consequently, \emph{a trace-preserving map
has a unital adjoint }\cite[Section 3.5]{Trevisan}:%
\begin{equation*}
\Phi ^{\ast }(I)=I.
\end{equation*}
In particular, positivity gives 
\begin{equation}
0\leq Y\leq I\quad \Longrightarrow \quad 0\leq \Phi ^{\ast }(Y)\leq I.
\label{eq:adjoint-contraction}
\end{equation}

\textbf{Ky Fan maximum principle}

For a Hermitian $d\times d$ matrix $A$, define its $s$th Ky Fan sum by 
\begin{equation}
K_{s}(A)=\sum_{j=0}^{s-1}\lambda _{j}(A),\qquad 1\leq s\leq d.
\label{eq:kyfan-sum}
\end{equation}%
Ky Fan's maximum principle tells 
\begin{equation}
K_{s}(A)=\max_{\substack{ R=R^{\ast }=R^{2} \\ \func{rank}R=s}}\func{Tr}(RA).
\label{eq:kyfan-principle}
\end{equation}%
See \cite[Theorem~1, formula~(4)]{Fan} and \cite[Theorem~11.13, p.~178]{Petz}%
. 

\subsection{The spherical amplification channel}

We
start by defining a map $\mathcal{C}_{N}:\mathcal{B}\left( \mathcal{P}%
_{N}\right) \mathcal{\rightarrow B}\left( \mathcal{P}_{N+1}\right) $ by its 
\emph{Kraus representation} 
\begin{equation}
\mathcal{C}_{N}(\rho )=A_{N,0}\rho A_{N,0}^{\ast }+A_{N,1}\rho A_{N,1}^{\ast
}\text{,}  \label{eq:channel}
\end{equation}%
where the Kraus operators $A_{N,0},A_{N,1}:\mathcal{P}_{N}\rightarrow 
\mathcal{P}_{N+1}$ are defined as 
\begin{align}
A_{N,0}e_{n}^{(N)}& =\left( \frac{N+1-n}{N+2}\right) ^{1/2}e_{n}^{(N+1)}%
\text{,}  \label{eq:A0} \\
A_{N,1}e_{n}^{(N)}& =\left( \frac{n+1}{N+2}\right) ^{1/2}e_{n+1}^{(N+1)}%
\text{.}  \label{eq:A1}
\end{align}

By defining the map by its Kraus representation (\ref{eq:channel}) it is
automatically completely positive. Moreover, equations (\ref{eq:A0})--(\ref%
{eq:A1}) give 
\begin{equation}
A_{N,0}^{\ast }A_{N,0}+A_{N,1}^{\ast }A_{N,1}=I_{\mathcal{P}_{N}},
\label{IdKrauss}
\end{equation}%
since%
\begin{equation*}
\left( A_{N,0}^{\ast }A_{N,0}+A_{N,1}^{\ast }A_{N,1}\right)
e_{n}^{(N)}=e_{n}^{(N)}\text{.}
\end{equation*}%
Therefore $\mathcal{C}_{N}$ is completely positive and trace preserving and
hence a quantum channel. For $M>N$, we define the iterated channel $\mathcal{%
C}_{N\rightarrow M}:\mathcal{B}\left( \mathcal{P}_{N}\right) \mathcal{%
\rightarrow B}\left( \mathcal{P}_{M}\right) $ by 
\begin{equation}
\mathcal{C}_{N\rightarrow M}=\mathcal{C}_{M-1}\circ \cdots \circ \mathcal{C}%
_{N}.  \label{itChannel}
\end{equation}

\subsection{Identification with the Lieb-Solovej channel}

In this subsection we show that the channel (\ref{itChannel}) is the same
channel used in \cite[(21)]{LiebSolovej}, after translating spin notation
into polynomial-degree notation and fixing the normalization.  Set 
\begin{equation}
J=\frac{N}{2},\qquad K=\frac{M}{2},\qquad k=2(K-J)=M-N.
\label{eq:spinparameters}
\end{equation}%
The spaces in the two notations are identified by 
\begin{equation}
\mathcal{P}_{N}\simeq \func{Sym}^{N}(\mathbb{C}^{2})\simeq \mathcal{H}%
_{J},\qquad \mathcal{P}_{M}\simeq \func{Sym}^{M}(\mathbb{C}^{2})\simeq 
\mathcal{H}_{K},  \label{eq:spaceidentifications}
\end{equation}%
and similarly 
\begin{equation}
\mathcal{P}_{M-N}\simeq \mathcal{H}_{K-J}.
\label{eq:auxiliaryspaceidentification}
\end{equation}%
Lieb and Solovej first define a channel $\Phi ^{k}$ and then introduce the
modified channel  
\begin{equation}
\widetilde{\Phi }^{k}(\rho )=\Phi ^{k}(U_{J}\rho U_{J}^{-1}),
\label{eq:LSmodifiedchannel}
\end{equation}%
where $U_{J}$ is the spin-reversing antiunitary operator \cite[(16)]%
{LiebSolovej}. For $K\geq J$, they prove   
\begin{equation}
\widetilde{\Phi }^{k}(\rho )=\frac{2J+1}{2K+1}\mathbf{P}_{K}\left( I_{%
\mathcal{H}_{K-J}}\otimes \rho \right) \mathbf{P}_{K},
\label{eq:LSprojectionchannel}
\end{equation}%
where $\mathbf{P}_{K}$ is the orthogonal projection of $\mathcal{H}%
_{K-J}\otimes \mathcal{H}_{J}$ onto its maximal-spin component $\mathcal{H}%
_{K}$ \cite[(17)]{LiebSolovej}.  Under the identifications (\ref%
{eq:spaceidentifications})-- (\ref{eq:auxiliaryspaceidentification}), the
coherent vector in this maximal-spin component is 
\begin{equation}
\kappa _{M,\xi }=\kappa _{M-N,\xi }\otimes \kappa _{N,\xi },
\label{eq:coherenttensor}
\end{equation}%
where $\xi \in \mathbb{C}^{2}$ is a unit vector. In particular, $\mathbf{P}%
_{K}\kappa _{M,\xi }=\kappa _{M,\xi }$.

Therefore, (\ref{eq:LSprojectionchannel}) gives 
\begin{align}
Q_{\widetilde{\Phi }^{,k}(X)}(\xi )& =\left\langle \widetilde{\Phi }%
^{,k}(X)\kappa _{M,\xi },\kappa _{M,\xi }\right\rangle   \notag \\
& =\frac{N+1}{M+1}\left\langle \left( I_{\mathcal{P}_{M-N}}\otimes X\right)
\left( \kappa _{M-N,\xi }\otimes \kappa _{N,\xi }\right) ,\kappa _{M-N,\xi
}\otimes \kappa _{N,\xi }\right\rangle   \notag \\
& =\frac{N+1}{M+1}\left\langle X\kappa _{N,\xi },\kappa _{N,\xi
}\right\rangle   \notag \\
& =\frac{N+1}{M+1}Q_{X}(\xi ).  \label{eq:LSlowersymbol}
\end{align}%
We now compare this with the iteration of the one-step channel. Directly
from (\ref{eq:channel})--(\ref{eq:A1}), 
\begin{equation}
Q_{\mathcal{C}_{L}(X)}(\xi )=\frac{L+1}{L+2}Q_{X}(\xi ).
\label{eq:onesteplowersymbol}
\end{equation}%
Applying (\ref{eq:onesteplowersymbol}) successively for $L=N,N+1,\ldots ,M-1$%
, we obtain 
\begin{align}
Q_{\mathcal{C}_{N\rightarrow M}(X)}(\xi )& =\frac{N+1}{M+1}Q_{X}(\xi ).  \label{eq:iteratedlowersymbol}
\end{align}%
Equations (\ref{eq:LSlowersymbol}) and (\ref{eq:iteratedlowersymbol}) show
that 
\begin{equation}
Q_{\mathcal{C}_{N\rightarrow M}(X)}=Q_{\widetilde{\Phi }^{M-N}(X)}.
\label{eq:equalsymbols}
\end{equation}%
Both operators act on $\mathcal{P}_{M}$. By Lemma \ref{lem:berezincalculus},
the lower-symbol map on $\func{End}(\mathcal{P}_{M})$ is injective. As a
result, (\ref{eq:equalsymbols}) implies 
\begin{equation}
\mathcal{C}_{N\rightarrow M}(X)=\widetilde{\Phi }^{M-N}(X)
\end{equation}%
for every $X\in \func{End}(\mathcal{P}_{N})$. Consequently, 
\begin{equation}
\mathcal{C}_{N\rightarrow M}=\widetilde{\Phi }^{M-N}\text{.}
\label{eq:channelLSidentification}
\end{equation}

\section{Fixed spectrum majorization of channel outputs}

\subsection{Preliminary results}

The following result computes the action of $\mathcal{C}_{N}$ on the
identity and in the lower symbol of a general operator. We already know from
(\ref{IdKrauss}) that  $\mathcal{C}_{N}$ is completely positive and trace
preserving and therefore a channel. We now compute its action on the
identity.

\begin{lemma}
\label{lem:channelproperties} The map $\mathcal{C}_{N}$ satisfies 
\begin{equation}
\mathcal{C}_{N}(I_{\mathcal{P}_{N}})=\frac{N+1}{N+2}I_{\mathcal{P}_{N+1}}.
\label{eq:identityimage}
\end{equation}%
If $\widetilde{\mathcal{C}_{N}}=\frac{N+2}{N+1}\mathcal{C}_{N}$, then 
\begin{equation}
Q_{\widetilde{\mathcal{C}_{N}}(X)}(z)=Q_{X}(z)  \label{eq:symbolpreservation}
\end{equation}%
for every operator $X$ on $\mathcal{P}_{N}$.
\end{lemma}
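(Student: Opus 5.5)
The plan is to compute everything on the monomial basis directly from the Kraus operators (\ref{eq:A0})--(\ref{eq:A1}). The key observation is that each adjoint $A_{N,j}^{\ast}$ maps the coherent vector $\kappa_{N+1,z}$ to a scalar multiple of $\kappa_{N,z}$. The first step is to write down the adjoints, which follow from the orthonormality of $\{e_n^{(N)}\}$ and $\{e_m^{(N+1)}\}$:
\begin{equation*}
A_{N,0}^{\ast}e_m^{(N+1)}=\Bigl(\tfrac{N+1-m}{N+2}\Bigr)^{1/2}e_m^{(N)}\ (m\le N),\qquad A_{N,0}^{\ast}e_{N+1}^{(N+1)}=0,
\end{equation*}
\begin{equation*}
A_{N,1}^{\ast}e_m^{(N+1)}=\Bigl(\tfrac{m}{N+2}\Bigr)^{1/2}e_{m-1}^{(N)}\ (m\ge 1),\qquad A_{N,1}^{\ast}e_{0}^{(N+1)}=0.
\end{equation*}

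For (\ref{eq:identityimage}), I would note that $\mathcal{C}_N(I)=A_{N,0}A_{N,0}^{\ast}+A_{N,1}A_{N,1}^{\ast}$ is diagonal in the basis $e_m^{(N+1)}$. Its diagonal entries are
\begin{equation*}
\frac{N+1-m}{N+2}+\frac{m}{N+2}=\frac{N+1}{N+2}.
\end{equation*}
This holds for every $0\le m\le N+1$, because the boundary cases $m=N+1$ and $m=0$ are covered by the vanishing coefficients.

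For (\ref{eq:symbolpreservation}), I would use the expansion
\begin{equation*}
\kappa_{N,z}=(1+|z|^2)^{-N/2}\sum_n\binom{N}{n}^{1/2}\overline{z}^{\,n}e_n^{(N)}.
\end{equation*}
This follows from the binomial theorem applied to $(1+w\overline z)^N$. The two binomial identities that make everything work are
\begin{equation*}
\binom{N+1}{m}(N+1-m)=(N+1)\binom{N}{m},\qquad \binom{N+1}{m}\,m=(N+1)\binom{N}{m-1}.
\end{equation*}
Applying the adjoints term by term then gives
\begin{equation*}
A_{N,0}^{\ast}\kappa_{N+1,z}=\Bigl(\tfrac{N+1}{N+2}\Bigr)^{1/2}\frac{1}{(1+|z|^2)^{1/2}}\kappa_{N,z},\qquad A_{N,1}^{\ast}\kappa_{N+1,z}=\Bigl(\tfrac{N+1}{N+2}\Bigr)^{1/2}\frac{\overline{z}}{(1+|z|^2)^{1/2}}\kappa_{N,z}.
\end{equation*}

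Finally, for an arbitrary (not necessarily Hermitian) $X$, I would expand
\begin{equation*}
Q_{\mathcal{C}_N(X)}(z)=\sum_{j=0,1}\langle XA_{N,j}^{\ast}\kappa_{N+1,z},A_{N,j}^{\ast}\kappa_{N+1,z}\rangle.
\end{equation*}
Substituting the formulas above yields
\begin{equation*}
Q_{\mathcal{C}_N(X)}(z)=\frac{N+1}{N+2}\cdot\frac{1+|z|^2}{1+|z|^2}\,Q_X(z)=\frac{N+1}{N+2}\,Q_X(z).
\end{equation*}
Multiplying by $\frac{N+2}{N+1}$ gives (\ref{eq:symbolpreservation}), and this also confirms (\ref{eq:onesteplowersymbol}).

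The only delicate point is the bookkeeping in the second adjoint computation. The index shift $m\mapsto m-1$ produces exactly one extra factor $\overline z$ together with the loss of one power of $(1+|z|^2)^{1/2}$, and the moduli squared of the two prefactors must sum to $(1+|z|^2)\cdot(1+|z|^2)^{-1}=1$. Everything else is routine.
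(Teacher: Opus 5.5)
Your proposal is correct and follows essentially the same route as the paper. You check $\mathcal{C}_N(I)$ on the monomial basis and show that each $A_{N,j}^{\ast}$ sends $\kappa_{N+1,z}$ to a scalar multiple of $\kappa_{N,z}$; this is exactly the paper's computation written in spinor coordinates $u=(1+|z|^2)^{-1/2}$, $v=\overline{z}(1+|z|^2)^{-1/2}$, where $|u|^2+|v|^2=1$ plays the role of your factor $(1+|z|^2)/(1+|z|^2)$, and you correctly supply the adjoint formulas and binomial identities that the paper leaves as ``elementary computations.''
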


\begin{proof}
Applying the two Kraus operators to the identity gives (\ref%
{eq:identityimage}) on every basis vector of $\mathcal{P}_{N+1}$. Now let $%
(u,v)$ be a unit vector in $\mathbb{C}^{2}$. Elementary computations using (%
\ref{eq:spincoherent}) give 
\begin{align*}
A_{N,0}^{\ast }\kappa _{N+1}(u,v)& =\left( \frac{N+1}{N+2}\right)
^{1/2}u\,\kappa _{N}(u,v), \\
A_{N,1}^{\ast }\kappa _{N+1}(u,v)& =\left( \frac{N+1}{N+2}\right)
^{1/2}v\,\kappa _{N}(u,v).
\end{align*}%
Consequently, 
\begin{equation*}
Q_{\mathcal{C}_{N}(X)}=\frac{N+1}{N+2}(|u|^{2}+|v|^{2})Q_{X}=\frac{N+1}{N+2}%
Q_{X},
\end{equation*}%
which is (\ref{eq:symbolpreservation}).
\end{proof}

The next result computes the output of a density operator diagonal on the
monomial basis, revealing such output is a density operator also diagonal in
the monomial basis. In short, $\mathcal{C}_{N}$ transforms passive states
into passive states.

\begin{lemma}
\label{lem:passivepreservation} Let $\rho =\func{diag}(p_{0},\ldots ,p_{N})$
in the monomial basis. Then 
\begin{equation*}
\mathcal{C}_{N}(\rho )=\func{diag}(q_{0},\ldots ,q_{N+1}),
\end{equation*}%
where 
\begin{equation}
q_{j}=\frac{(N+1-j)p_{j}+jp_{j-1}}{N+2},\qquad 0\leq j\leq N+1,
\label{eq:birthformula}
\end{equation}%
with $p_{-1}=p_{N+1}=0$. If $p_{0}\geq \cdots \geq p_{N}$, then $q_{0}\geq
\cdots \geq q_{N+1}$.
\end{lemma}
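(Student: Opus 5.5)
The argument is a direct computation with the Kraus operators (\ref{eq:A0})--(\ref{eq:A1}), followed by a monotonicity check. Write $\rho=\sum_{n=0}^{N}p_n|e_n^{(N)}\rangle\langle e_n^{(N)}|$. Since $A_{N,0}$ and $A_{N,1}$ send each monomial to a multiple of a single monomial, we have
\begin{equation*}
A_{N,0}|e_n\rangle\langle e_n|A_{N,0}^{\ast}=\frac{N+1-n}{N+2}|e_n^{(N+1)}\rangle\langle e_n^{(N+1)}|,
\end{equation*}
\begin{equation*}
A_{N,1}|e_n\rangle\langle e_n|A_{N,1}^{\ast}=\frac{n+1}{N+2}|e_{n+1}^{(N+1)}\rangle\langle e_{n+1}^{(N+1)}|.
\end{equation*}
Hence $\mathcal{C}_N(\rho)$ is diagonal in the monomial basis of $\mathcal{P}_{N+1}$. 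Collecting the coefficient of $|e_j^{(N+1)}\rangle\langle e_j^{(N+1)}|$ gives two contributions: $n=j$ from $A_{N,0}$ and $n=j-1$ from $A_{N,1}$. This yields $q_j=\frac{(N+1-j)p_j+jp_{j-1}}{N+2}$. The conventions $p_{-1}=p_{N+1}=0$ cover the boundary indices $j=0$ and $j=N+1$ automatically.

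For monotonicity, compute the consecutive differences. For $0\le j\le N$,
\begin{equation*}
(N+2)(q_j-q_{j+1})=(N+1-j)p_j+jp_{j-1}-(N-j)p_{j+1}-(j+1)p_j .
\end{equation*}
Regroup this as
\begin{equation*}
j\,(p_{j-1}-p_j)+(N-j)(p_j-p_{j+1}).
\end{equation*}
To check the regrouping, note that the coefficient of $p_j$ is $(N+1-j)-(j+1)=N-2j=-j+(N-j)$. Both coefficients $j$ and $N-j$ are nonnegative for $0\le j\le N$. Both differences $p_{j-1}-p_j$ and $p_j-p_{j+1}$ are nonnegative by the ordering hypothesis.

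The boundary terms need a brief check, and this is the only point requiring care. At $j=0$ the term involving $p_{-1}$ has coefficient $0$. At $j=N$ the term involving $p_{N+1}=0$ has coefficient $0$, so it does not matter that $p_N-p_{N+1}=p_N\ge 0$. Therefore $q_j\ge q_{j+1}$ for all $j$, which proves the claim.
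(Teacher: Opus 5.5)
Your proposal is correct and follows the paper's proof. You read off $q_j$ from the action of the two Kraus operators on the projections $|e_n\rangle\langle e_n|$, and you get monotonicity from the identity $(N+2)(q_j-q_{j+1})=j(p_{j-1}-p_j)+(N-j)(p_j-p_{j+1})$, exactly as the paper does; your explicit check of $j=0$, where the possibly negative difference $p_{-1}-p_0$ is multiplied by zero, fills in a detail the paper leaves implicit.
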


\begin{proof}
Formula (\ref{eq:birthformula}) follows immediately from the two Kraus
operators. For $0\leq j\leq N$, 
\begin{equation*}
(N+2)(q_{j}-q_{j+1})=j(p_{j-1}-p_{j})+(N-j)(p_{j}-p_{j+1})\geq 0.
\end{equation*}%
Thus the output is passive whenever the input is passive. 
\end{proof}

\subsection{Single channel majorization}

The next identity will be used in the proof of Theorem \ref{thm:onestep} and
Lemma \ref{lem:passiveorder}\ below. For $1\leq s\leq N+1$, set 
\begin{equation}
c_{s}=\frac{N+2-s}{N+2}.  \label{eq:cs}
\end{equation}%
Summing (\ref{eq:birthformula}) gives 
\begin{equation}
\sum_{j=0}^{s-1}q_{j}=\sum_{j=0}^{s-2}p_{j}+c_{s}p_{s-1}=c_{s}%
\sum_{j=0}^{s-1}p_{j}+(1-c_{s})\sum_{j=0}^{s-2}p_{j}.
\label{eq:partialpassive}
\end{equation}%
For $s=N+2$, the sum is one.

Our first Theorem says that the output of the density operator $\rho $ via
the channel $\mathcal{C}_{N}$ defined in (\ref{eq:channel}) is always
majorized by the output of the passive rearrangement of the same density
operator.

\begin{theorem}
\label{thm:onestep} For every density operator $\rho $ on $\mathcal{P}_{N}$, 
\begin{equation}
\mathcal{C}_{N}(\rho )\prec \mathcal{C}_{N}(\rho ^{\downarrow }).
\label{eq:onestepmajorization}
\end{equation}
\end{theorem}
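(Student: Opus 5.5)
The plan is to verify the partial-sum inequalities defining $\prec$ directly, using the Ky Fan maximum principle (\ref{eq:kyfan-principle}) together with the adjoint channel. The trace sums agree because $\mathcal{C}_{N}$ is trace preserving, so both sides have trace $1$. For $1\leq s\leq N+1$ we must show $K_{s}(\mathcal{C}_{N}(\rho ))\leq K_{s}(\mathcal{C}_{N}(\rho ^{\downarrow }))$. The case $s=N+2$ is the trace identity.

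\textbf{Right-hand side.} By Lemma \ref{lem:passivepreservation}, $\mathcal{C}_{N}(\rho ^{\downarrow })=\func{diag}(q_{0},\ldots ,q_{N+1})$ with $q_{0}\geq \cdots \geq q_{N+1}$. Its Ky Fan sum is therefore exactly $\sum_{j=0}^{s-1}q_{j}$. By (\ref{eq:partialpassive}) this equals
\[
\sum_{j=0}^{s-2}p_{j}+c_{s}p_{s-1},
\]
where $p_{0}\geq p_{1}\geq \cdots $ are the eigenvalues of $\rho$, padded with zeros up to index $N$.

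\textbf{Left-hand side.} Fix an arbitrary rank-$s$ orthogonal projection $R$ on $\mathcal{P}_{N+1}$. By trace duality (\ref{eq:adjoint-definition}), $\func{Tr}(R\,\mathcal{C}_{N}(\rho ))=\func{Tr}(Y\rho )$, where $Y=\mathcal{C}_{N}^{\ast }(R)=A_{N,0}^{\ast }RA_{N,0}+A_{N,1}^{\ast }RA_{N,1}$. We record two facts about $Y$.
\begin{itemize}
\item By (\ref{eq:adjoint-contraction}), $0\leq Y\leq I$.
\item By duality and (\ref{eq:identityimage}),
\[
\func{Tr}(Y)=\func{Tr}(R\,\mathcal{C}_{N}(I))=\frac{s(N+1)}{N+2}.
\]
\end{itemize}
Let $\mu _{0}\geq \cdots \geq \mu _{N}$ be the eigenvalues of $Y$. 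By the von Neumann trace inequality (or Ky Fan again), $\func{Tr}(Y\rho )\leq \sum_{j}\mu _{j}p_{j}$.

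\textbf{Key step: a linear maximization.} We maximize $\sum_{j}\mu _{j}p_{j}$ over all $\mu$ with $0\leq \mu _{j}\leq 1$ and $\sum \mu _{j}=s(N+1)/(N+2)$. Since the $p_{j}$ are nonincreasing, the greedy choice is optimal: $\mu _{0}=\cdots =\mu _{s-2}=1$, and $\mu _{s-1}$ takes the remaining mass,
\[
\mu _{s-1}=\frac{s(N+1)-(s-1)(N+2)}{N+2}=\frac{N+2-s}{N+2}=c_{s}\in [0,1].
\]
Hence $\sum_{j}\mu _{j}p_{j}\leq \sum_{j=0}^{s-2}p_{j}+c_{s}p_{s-1}$.

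To justify the greedy choice rigorously, write $\sum \mu _{j}p_{j}$ by Abel summation as $\sum_{k}(p_{k}-p_{k+1})M_{k}$, where $M_{k}=\sum_{j\leq k}\mu _{j}$. Each coefficient $p_{k}-p_{k+1}$ is nonnegative. Each $M_{k}$ is bounded by $\min (k+1,\,s(N+1)/(N+2))$, and the greedy choice attains this bound for every $k$ simultaneously.

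\textbf{Conclusion.} Maximizing over $R$ and applying (\ref{eq:kyfan-principle}) gives $K_{s}(\mathcal{C}_{N}(\rho ))\leq K_{s}(\mathcal{C}_{N}(\rho ^{\downarrow }))$ for every $s$, which proves (\ref{eq:onestepmajorization}). The only delicate points are checking that $c_{s}\in [0,1]$ for $1\leq s\leq N+1$ and the rigor of the linear maximization step, which the Abel summation argument handles.
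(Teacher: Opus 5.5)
Your proposal is correct and follows the paper's proof essentially step by step. Both arguments use the operator $\mathcal{C}_{N}^{\ast}(R)$ with $0\leq \mathcal{C}_{N}^{\ast}(R)\leq I$ and trace $s(N+1)/(N+2)$, the trace inequality for positive operators, the partial-sum identity (\ref{eq:partialpassive}), and Ky Fan's maximum principle; your Abel-summation maximization simply writes out the paper's majorization $(\gamma_{0},\ldots,\gamma_{N})\prec(1,\ldots,1,c_{s},0,\ldots,0)$ combined with the decreasing order of the $p_{j}$.
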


\begin{proof}
Let $p_{0}\geq \cdots \geq p_{N}$ be the eigenvalues of $\rho $. Fix $1\leq
s\leq N+1$ and let $R$ be an arbitrary rank-$s$ orthogonal projection on $%
\mathcal{P}_{N+1}$. Since $\mathcal{C}_{N}$ is trace preserving, its adjoint
is unital. Consequently, from (\ref{eq:adjoint-contraction}),    
\begin{equation}
0\leq \mathcal{C}_{N}^{\ast }(R)\leq I_{\mathcal{P}_{N}}.
\label{eq:Bcontraction}
\end{equation}%
Equation (\ref{eq:identityimage}) gives: 
\begin{equation}
\func{Tr}\left( \mathcal{C}_{N}^{\ast }(R)\right) =\func{Tr}\left( R\mathcal{%
C}_{N}(I)\right) =\frac{s(N+1)}{N+2}.  \label{eq:Btrace}
\end{equation}%
Let $\gamma _{0}\geq \cdots \geq \gamma _{N}$ be the eigenvalues of $%
\mathcal{C}_{N}^{\ast }(R)$. The following majorization relation holds: 
\begin{equation}
(\gamma _{0},\ldots ,\gamma _{N})\prec (1,\ldots ,1,c_{s},0,\ldots ,0)=y%
\text{,}  \label{eq:Bmajorized}
\end{equation}%
where the vector on the right has $s-1$ entries equal to $1$, followed by $%
c_{s}$ and then $N+1-s$ zero entries. Denote the entries on the right by $%
y_{j}$. Its partial sums are%
\begin{equation*}
\sum_{j=0}^{m}y_{j}=\left\{ 
\begin{array}{c}
m+1, \\ 
s-1+c_{s},%
\end{array}%
\right. 
\begin{array}{c}
0\leq m\leq s-2 \\ 
s-1\leq m\leq N\text{.}%
\end{array}%
\end{equation*}%
Now, for $0\leq m\leq s-2$, (\ref{eq:Bcontraction}) gives%
\begin{equation*}
\sum_{j=0}^{m}\gamma _{j}\leq m+1=\sum_{j=0}^{m}y_{j}\text{,}
\end{equation*}%
while for $s-1\leq m\leq N$, (\ref{eq:Btrace}) gives%
\begin{equation*}
\sum_{j=0}^{m}\gamma _{j}\leq \sum_{j=0}^{N}\gamma _{j}=\frac{s(N+1)}{N+2}%
=s-1+c_{s}=\sum_{j=0}^{m}y_{j},
\end{equation*}%
which proves the majorization relation (\ref{eq:Bmajorized}). Since the
eigenvalues of $\rho $\ and $\mathcal{C}_{N}^{\ast }(R)$\ are arranged in
decreasing order, the trace inequality for two positive operators and (\ref%
{eq:Bmajorized}) now yield 
\begin{align}
\func{Tr}\left( R\mathcal{C}_{N}(\rho )\right) & =\func{Tr}(\rho \mathcal{C}%
_{N}^{\ast }(R))\leq \sum_{j=0}^{N}p_{j}\gamma _{j}  \notag \\
& \leq \sum_{j=0}^{s-2}p_{j}+c_{s}p_{s-1}.  \label{eq:tracebound}
\end{align}%
By (\ref{eq:partialpassive}), the last expression is the sum of the first $s$
diagonal entries of $\mathcal{C}_{N}(\rho ^{\downarrow })$. Lemma \ref%
{lem:passivepreservation} tells us that these entries are in decreasing
order. Maximizing the left side of (\ref{eq:tracebound}) over all rank-$s$
projections $R$ and using Ky Fan's maximum principle (\ref%
{eq:kyfan-principle}),%
\begin{equation*}
\max_{\substack{ R=R^{\ast }=R^{2} \\ \func{rank}R=s}}\func{Tr}(R\mathcal{C}%
_{N}(\rho ))=\sum_{j=0}^{s-1}\lambda _{j}(\mathcal{C}_{N}(\rho )),\qquad
1\leq s\leq N+1
\end{equation*}
gives the desired $s$th partial-eigenvalue inequality for $1\leq s\leq N+1$.
Since $\func{Tr}(\rho )=\func{Tr}(\rho ^{\downarrow })=1$ and channels are
trace-preserving, the case $s=N+2$ follows from the equality between the two
sums of eigenvalues.
\end{proof}

The proof shows why the finite-dimensional channel is especially well suited
to the problem. No differential inequality is hidden in the argument. The
entire one-step estimate comes from positivity, unitality of the adjoint,
the scalar identity (\ref{eq:identityimage}), and the fact that passive
diagonal entries remain ordered.

\begin{lemma}
\label{lem:passiveorder} Let $p=(p_{0},...,p_{N})$ and $p^{\prime
}=(p_{0}^{\prime },...,p_{N}^{\prime })$ be vectors with decreasing
entries,satisfying%
\begin{equation*}
p_{j}\geq 0,\text{ \ \ \ }p_{j}^{\prime }\geq 0\text{\ \ \ \ \ \ \ \ }%
\sum\limits_{j=0}^{N}p_{j}=\sum\limits_{j=0}^{N}p_{j}^{\prime }=1\text{.}
\end{equation*}%
If $p\prec p^{\prime }$, then the passive outputs determined by (\ref%
{eq:birthformula}) satisfy $q\prec q^{\prime }$.
\end{lemma}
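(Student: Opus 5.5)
The plan is to work directly with partial sums, using identity (\ref{eq:partialpassive}). Write $P_s=\sum_{j=0}^{s-1}p_j$ and $P'_s=\sum_{j=0}^{s-1}p'_j$, with the convention $P_0=P'_0=0$. The hypothesis $p\prec p'$ then reads $P_s\le P'_s$ for $0\le s\le N+1$, with $P_{N+1}=P'_{N+1}=1$.

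First, by Lemma \ref{lem:passivepreservation}, both $q$ and $q'$ have decreasing entries, since $p$ and $p'$ do. Hence the partial sums of $q$ and $q'$ in their given order are exactly their Ky Fan sums. So it suffices to compare the quantities $Q_s=\sum_{j=0}^{s-1}q_j$ and $Q'_s=\sum_{j=0}^{s-1}q'_j$ for $1\le s\le N+2$.

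Second, identity (\ref{eq:partialpassive}) gives, for $1\le s\le N+1$,
\begin{equation*}
Q_s=c_sP_s+(1-c_s)P_{s-1},\qquad Q'_s=c_sP'_s+(1-c_s)P'_{s-1},
\end{equation*}
where $c_s=\frac{N+2-s}{N+2}\in[0,1]$. The coefficients $c_s$ depend only on $N$ and $s$, not on the vector, so they are the same for both outputs. Since $c_s\ge0$ and $1-c_s\ge0$, the inequalities $P_s\le P'_s$ and $P_{s-1}\le P'_{s-1}$ give $Q_s\le Q'_s$ termwise. For $s=N+2$, both totals equal $1$ because the channel is trace preserving (equivalently, sum (\ref{eq:birthformula}) directly).

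There is essentially no obstacle. The only point needing care is the case $s=1$, where the empty-sum convention gives $Q_1=c_1p_0$. Nonnegativity of the coefficients, together with the convex-combination form of (\ref{eq:partialpassive}), is what makes the argument work. The monotonicity from Lemma \ref{lem:passivepreservation} is what allows partial sums of diagonal entries to be identified with sums of the largest eigenvalues.
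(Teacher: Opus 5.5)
Your proof is correct and follows the paper's argument: formula (\ref{eq:partialpassive}) writes each output partial sum as the convex combination $c_sP_s+(1-c_s)P_{s-1}$ of two consecutive input partial sums, with coefficients that do not depend on the vector, so the input inequalities carry over termwise, and the full sums are both one. You also state explicitly, via Lemma \ref{lem:passivepreservation}, that $q$ and $q'$ are already decreasing, so their ordinary partial sums are the Ky Fan sums; the paper leaves this point implicit.
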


\begin{proof}
For each $1\leq s\leq N+1$, formula (\ref{eq:partialpassive}) expresses the $%
s$th output partial sum as a convex combination of two consecutive input
partial sums. Both input partial sums for $p$ are bounded by the
corresponding sums for $p^{\prime }$. The full output sums are equal to one. 
\end{proof}

\subsection{Proof of Theorem \protect\ref{thm:iterated} (Iterated channel
majorization) }

The assertion is clear for $M=N$. Suppose it holds at degree $M$ and put 
\begin{equation*}
\sigma =\mathcal{C}_{N\rightarrow M}(\rho ),\qquad \tau =\mathcal{C}%
_{N\rightarrow M}(\rho ^{\downarrow }).
\end{equation*}%
Then $\sigma \prec \tau $, and $\tau $ is passive by Lemma \ref%
{lem:passivepreservation}. Theorem \ref{thm:onestep} gives 
\begin{equation*}
\mathcal{C}_{M}(\sigma )\prec \mathcal{C}_{M}(\sigma ^{\downarrow }).
\end{equation*}%
Since $\sigma ^{\downarrow }\prec \tau $ and both states in this last
comparison are passive, Lemma \ref{lem:passiveorder} gives 
\begin{equation*}
\mathcal{C}_{M}(\sigma ^{\downarrow })\prec \mathcal{C}_{M}(\tau ).
\end{equation*}%
Transitivity proves the assertion at degree $M+1$.

\section{Toeplitz operators and $SU(2)$ Husimi majoration}

Introduce the rescaled positive operator 
\begin{equation}
X_{M}(\rho )=\frac{M+1}{N+1}\mathcal{C}_{N\rightarrow M}(\rho )\qquad (M\geq
N).  \label{eq:XM}
\end{equation}%
The next Lemma shows that the Husimi function is invariant under the action
of $X_{M}$\ on $\rho $.

\begin{lemma}
\label{lem:exactlower} For every $M\geq N$, 
\begin{equation}
Q_{X_{M}(\rho )}=Q_{\rho }.  \label{eq:exactlower}
\end{equation}%
Moreover, 
\begin{equation}
0\leq X_{M}(\rho )\leq I_{\mathcal{P}_{M}}.  \label{eq:XMbound}
\end{equation}
\end{lemma}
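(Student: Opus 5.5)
The identity \eqref{eq:exactlower} follows directly from the iterated lower-symbol formula \eqref{eq:iteratedlowersymbol}, which was obtained by applying Lemma \ref{lem:channelproperties} (equivalently \eqref{eq:onesteplowersymbol}) successively for $L=N,\ldots,M-1$. It gives $Q_{\mathcal{C}_{N\rightarrow M}(\rho)}=\frac{N+1}{M+1}Q_{\rho}$. Multiplying by $\frac{M+1}{N+1}$ and using linearity of $X\mapsto Q_X$ yields $Q_{X_M(\rho)}=Q_\rho$. For $M=N$ the statement is trivial, since $X_N(\rho)=\rho$.

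For the operator bounds \eqref{eq:XMbound}, positivity is immediate: $\mathcal{C}_{N\rightarrow M}$ is a composition of completely positive maps (each $\mathcal{C}_L$ has a Kraus form), so $\mathcal{C}_{N\rightarrow M}(\rho)\geq 0$, and the factor $\frac{M+1}{N+1}$ is positive. For the upper bound I would use that $\rho\leq I_{\mathcal{P}_N}$, which holds because $\rho$ is a density operator with eigenvalues in $[0,1]$. Since each $\mathcal{C}_L$ is a positive linear map, it is order preserving, so $\mathcal{C}_{N\rightarrow M}(\rho)\leq \mathcal{C}_{N\rightarrow M}(I_{\mathcal{P}_N})$. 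Iterating \eqref{eq:identityimage} gives
\begin{equation*}
\mathcal{C}_{N\rightarrow M}(I_{\mathcal{P}_N})=\prod_{L=N}^{M-1}\frac{L+1}{L+2}\,I_{\mathcal{P}_M}=\frac{N+1}{M+1}I_{\mathcal{P}_M},
\end{equation*}
because the product telescopes. Multiplying by $\frac{M+1}{N+1}$ gives $X_M(\rho)\leq I_{\mathcal{P}_M}$.

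There is no real obstacle here. The only points needing care are the telescoping product and the fact that positivity of each one-step channel (not complete positivity) is what makes the composed map monotone, so that $\rho\leq I$ transfers to the output.
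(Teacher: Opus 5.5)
Your proof is correct and follows essentially the paper's argument. The paper runs the same step-by-step induction via the recurrence $X_{M+1}(\rho)=\widetilde{\mathcal{C}}_{M}(X_{M}(\rho))$, using that the rescaled one-step map $\widetilde{\mathcal{C}}_{M}$ is positive, unital and symbol-preserving (Lemma \ref{lem:channelproperties}); you instead compose first and telescope the factors $\frac{L+1}{L+2}$ from the identity image, which uses exactly the same ingredients.
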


\begin{proof}
Write $X_{N}(\rho )=\rho $. The recurrence 
\begin{equation*}
X_{M+1}(\rho )=\frac{M+2}{M+1}\mathcal{C}_{M}(X_{M}(\rho ))=\widetilde{%
\mathcal{C}}_{M}(X_{M}(\rho ))
\end{equation*}%
and Lemma \ref{lem:channelproperties} prove (\ref{eq:exactlower}) by
induction. The map $\widetilde{\mathcal{C}}_{M}$ is positive and unital by (%
\ref{eq:identityimage}). Hence it maps the operator interval $[0,I]$ into
itself, which proves (\ref{eq:XMbound}).
\end{proof}

\begin{figure}[t]
\centering
\includegraphics[width=0.95\textwidth]{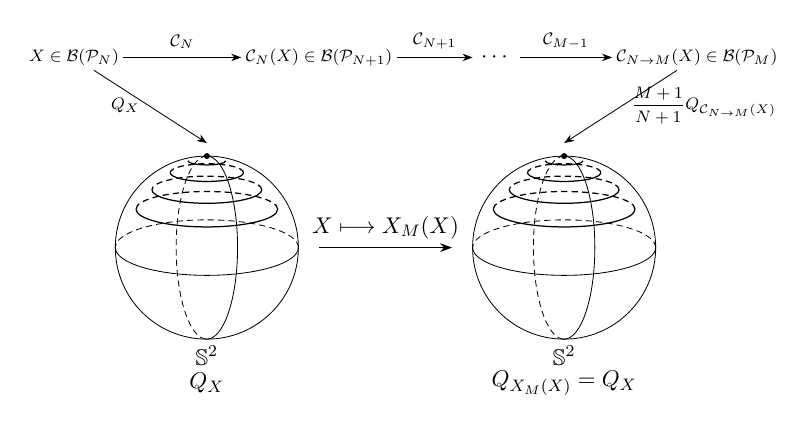}
\caption{Geometric representation of the spherical amplification channel.
Amplification raises the representation degree from $\mathcal P_N$ to
$\mathcal P_M$, while the natural normalization preserves the covariant
symbol and its level curves on the Bloch sphere:
$Q_{X_M(X)}=Q_X$.}
\label{fig:spherical-amplification}
\end{figure}

We have shown that the amplification increases the matrix dimension, while
leaving the observable function on the sphere exactly unchanged. The next
section shows that the eigenvalue distribution of $X_{M}(\rho )$ converges
to the value distribution of that function.

Figure~\ref{fig:spherical-amplification} shows how the amplification channel
maps the input state into a polynomial space of higher dimension, while
 the covariant symbol of the rescaled
channel output remains unchanged: it is exactly the Husimi function of the
input state. Geometrically, this is expressed by preservation of the
level curves on the Bloch sphere.

\subsection{Spherical harmonics and the lower symbol map}

For $0\leq \ell \leq M$, let $\mathcal{Y}_{\ell }$ denote the space of
spherical harmonics of degree $\ell $, regarded as functions on $\mathbb{C}$
by stereographic projection \cite[page 378]{OlverHandbook}. Then $\dim 
\mathcal{Y}_{\ell }=2l+1$ and the spaces $\mathcal{Y}_{\ell }$\ \ are
mutually orthogonal in $L^{2}(\mathbb{C},dm)$. The space of spherical
polynomials of degree at most $M$ is defined by the orthogonal sum: 
\begin{equation*}
\mathcal{V}_{M}=\bigoplus_{\ell =0}^{M}\mathcal{Y}_{\ell }\text{.}
\end{equation*}%
Recall that the lower symbol of an operator $X\in End(\mathcal{P}_{M})$, was
defined in (\ref{lowersymbol}) as%
\begin{equation*}
Q_{X}(z)=\left\langle X\kappa _{M,z},\kappa _{M,z}\right\rangle _{\mathcal{P}%
_{M}}
\end{equation*}%
and write 
\begin{equation*}
X_{j,k}=\left\langle Xe_{k}^{(M)},e_{j}^{(M)}\right\rangle _{\mathcal{P}%
_{M}},\ \ \ \ \ \ 0\leq j,k\leq M.
\end{equation*}

\begin{lemma}
\label{lem:berezincalculus} The lower symbol map $X\rightarrow Q_{X}$ is
injective and a linear isomorphism $End(\mathcal{P}_{M})$ onto $\mathcal{V}%
_{M}$. Moreover, the Berezin transform $B_{M}$ preserves each spherical
harmonic space and satisfies 
\begin{equation}
\mathcal{B}_{M}|_{\mathcal{Y}_{\ell }}=\beta _{M,\ell }I,
\label{eq:berezineigenvalue}
\end{equation}%
where $\beta _{M,\ell }>0$ for $0\leq \ell \leq M$. Thus $B_{M}^{-1}$\ is
defined on $\mathcal{V}_{M}$. Moreover, for fixed $\ell $, 
\begin{equation}
\beta _{M,\ell }\longrightarrow 1\quad \text{as }M\longrightarrow \infty .
\label{eq:betalimit}
\end{equation}
\end{lemma}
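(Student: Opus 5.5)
We will prove the statement through an explicit computation in the matrix coordinates $X_{j,k}$, using the $U(1)$ symmetry $z\mapsto e^{i\theta}z$ to split everything by weight. First, we expand the lower symbol in the monomial basis. Since $\kappa_{M,z}(w)=(1+|z|^2)^{-M/2}\sum_{n}\binom{M}{n}^{1/2}\bar z^{\,n}e_n(w)$, we obtain
\[
Q_X(z)=\frac{1}{(1+|z|^2)^{M}}\sum_{j,k=0}^{M}\binom{M}{j}^{1/2}\binom{M}{k}^{1/2}X_{j,k}\,\bar z^{\,k}z^{\,j}.
\]
For each fixed weight $d=j-k$, the terms carry the angular factor $e^{id\theta}$. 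The radial parts $r^{j+k}(1+r^2)^{-M}$ with $j-k=d$ fixed are linearly independent functions of $r$, because the exponents $j+k$ are distinct. Injectivity of $X\mapsto Q_X$ follows by separating Fourier modes in $\theta$ and then comparing radial functions.

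Next we identify the image. Stereographically, $|z|^2/(1+|z|^2)=(1-x_3)/2$ and $z/(1+|z|^2)=(x_1+ix_2)/2$. Each term $\bar z^{k}z^{j}(1+|z|^2)^{-M}$ with $j,k\le M$ can be written as
\[
\Bigl(\tfrac{z}{1+|z|^2}\Bigr)^{a}\Bigl(\tfrac{\bar z}{1+|z|^2}\Bigr)^{b}\Bigl(\tfrac{|z|^2}{1+|z|^2}\Bigr)^{c}\Bigl(\tfrac{1}{1+|z|^2}\Bigr)^{e},
\]
with $a+b+c+e=M$. This is a polynomial of degree at most $M$ in $(x_1,x_2,x_3)$, so its restriction to the sphere lies in $\mathcal V_M$. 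The map is therefore an injective linear map from a space of dimension $(M+1)^2$ into $\mathcal V_M$, whose dimension is $\sum_{\ell=0}^M(2\ell+1)=(M+1)^2$. Hence it is an isomorphism.

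For the Berezin transform, we use $SU(2)$ covariance. Since $|\langle\kappa_{M,z},\kappa_{M,w}\rangle|^2$ depends only on the spherical distance and $dm$ is rotation invariant, $\mathcal B_M$ commutes with rotations. By Schur's lemma, applied to the irreducible and mutually inequivalent spaces $\mathcal Y_\ell$, it acts on each $\mathcal Y_\ell$ as a scalar $\beta_{M,\ell}$. Since $\mathcal B_M a=Q_{\mathcal T_M(a)}$, it maps into $\mathcal V_M$.

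To compute $\beta_{M,\ell}$, we evaluate at the pole $z=0$ using the zonal harmonic $P_\ell(x_3)$. The kernel is $|\langle\kappa_{M,0},\kappa_{M,w}\rangle|^2=(1+|w|^2)^{-M}=\bigl(\tfrac{1+x_3}{2}\bigr)^M$. This gives
\[
\beta_{M,\ell}=\frac{M+1}{2}\int_{-1}^{1}\Bigl(\frac{1+t}{2}\Bigr)^{M}P_\ell(t)\,dt .
\]
By Rodrigues' formula and $\ell$ integrations by parts, this equals the classical value
\[
\beta_{M,\ell}=\frac{M!\,(M+1)!}{(M-\ell)!\,(M+\ell+1)!}.
\]
This is positive for $\ell\le M$ and tends to $1$ as $M\to\infty$ with $\ell$ fixed, since the ratio is $\prod_{i=1}^{\ell}\frac{M+1-i}{M+1+i}$. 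Positivity shows $\mathcal B_M$ is invertible on $\mathcal V_M$.

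The main obstacle is this explicit evaluation of the integral. As a fallback for positivity alone, we can write $\beta_{M,\ell}$ in terms of Hilbert–Schmidt norms: for $a\in\mathcal Y_\ell$ real, $\langle\mathcal B_M a,a\rangle=(M+1)\|\mathcal T_M(a)\|_{HS}^2$ up to normalization. This is nonzero because $\mathcal T_M(a)\neq0$, which follows by duality with the injectivity just shown: $\mathrm{Tr}(\mathcal T_M(a)X)=(M+1)\langle a,\overline{Q_X}\rangle$ and the $Q_X$ span $\mathcal V_M$. For the limit, the concentration of $(\frac{1+t}{2})^M(M+1)/2$ at $t=1$ as an approximate identity gives $\beta_{M,\ell}\to P_\ell(1)=1$.
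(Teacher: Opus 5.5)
Your proof is correct. In three of its four steps it follows the paper:
\begin{itemize}
\item injectivity via linear independence of the $z^{j}\bar z^{k}$, which you make explicit through Fourier modes and radial powers;
\item the scalar action of $\mathcal{B}_M$ on each $\mathcal{Y}_\ell$, where your Schur's-lemma argument is the abstract form of the paper's Funk--Hecke step;
\item the value of $\beta_{M,\ell}$, where your integral $\frac{M+1}{2}\int_{-1}^{1}(\frac{1+t}{2})^{M}P_\ell(t)\,dt$ becomes the paper's $(M+1)\int_0^1(1-x)^M P_\ell(1-2x)\,dx$ under $t=1-2x$, followed by the same Rodrigues computation.
\end{itemize}

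The genuine difference is how you show the image is all of $\mathcal{V}_M$. The paper argues representation-theoretically. The symbol map commutes with rotations, $\func{End}(\mathcal{P}_M)$ decomposes into invariant pieces of dimensions $1,3,\dots,2M+1$, the piece of dimension $2\ell+1$ lands in $\mathcal{Y}_\ell$, and injectivity forces equality. You instead observe that each $z^{j}\bar z^{k}(1+|z|^2)^{-M}$ is a product of $M$ affine functions of $(x_1,x_2,x_3)$. It is therefore the restriction of a polynomial of degree at most $M$, hence lies in $\mathcal{V}_M$, and you finish by counting dimensions, $(M+1)^2=\dim\mathcal{V}_M$.

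Your route is more elementary and self-contained. It needs only the harmonic decomposition of polynomials restricted to the sphere. The paper instead relies on the Clebsch--Gordan decomposition and the equivariance of $X\mapsto Q_X$, which it invokes as standard facts. In exchange, the paper's route gives finer information: the symbol map carries the $\ell$-isotypic component of $\func{End}(\mathcal{P}_M)$ onto $\mathcal{Y}_\ell$.

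Your fallback is also a real addition. It uses the identity $\langle\mathcal{B}_M a,a\rangle=\frac{1}{M+1}\|\mathcal{T}_M(a)\|_{HS}^{2}$. The constant is $1/(M+1)$, not $M+1$, as your own duality formula shows, but this does not affect the sign. Combined with surjectivity of the symbol map, the identity proves $\beta_{M,\ell}>0$. The approximate-identity argument proves $\beta_{M,\ell}\to 1$. Neither step depends on the explicit Rodrigues computation, which the paper only sketches.
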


\begin{proof}
Expanding $Q_{X}(z)=\left\langle X\kappa _{M,z},\kappa _{M,z}\right\rangle _{%
\mathcal{P}_{M}}$\ in the monomial basis gives 
\begin{equation}
(1+|z|^{2})^{M}Q_{X}(z)=\sum_{j,k=0}^{M}\binom{M}{j}^{1/2}\binom{M}{k}%
^{1/2}X_{jk}z^{j}\overline{z}^{\,k}.  \label{eq:lowersymbolpolynomial}
\end{equation}%
Since the monomials $z^{j}\overline{z}^{\,k}$ are linearly independent, the
lower symbol map $X\rightarrow Q_{X}$ is injective. This map also commutes
with rotations. Under rotations, the space of operators on $\mathcal{P}_{M}$
splits into invariant subspaces of dimensions $1,3,5,...,2M+1,$ and the
subspace of dimension $2\ell +1$ is sent into $\mathcal{Y}_{\ell }$. Since $%
\dim \mathcal{Y}_{\ell }=2\ell +1$\ and the lower symbol map is injective,
each such image equals $\mathcal{Y}_{\ell }$. Consequently,%
\begin{equation*}
\{Q_{X}:X\in End(\mathcal{P}_{M})\}=\mathcal{V}_{M}=\bigoplus_{\ell =0}^{M}%
\mathcal{Y}_{\ell }\text{.}
\end{equation*}%
The kernel in (\ref{eq:berezin}) depends only on spherical distance, so the
Funk--Hecke formula makes $\mathcal{B}_{M}$ scalar on each $\mathcal{Y}%
_{\ell }$. At the north pole that scalar is 
\begin{equation*}
\beta _{M,\ell }=(M+1)\int_{0}^{1}(1-x)^{M}P_{\ell }(1-2x)\,dx.
\end{equation*}%
The shifted Rodrigues formula and $\ell $ integrations by parts give 
\begin{equation*}
\beta _{M,\ell }=\frac{M+1}{\ell !}\frac{M!}{(M-\ell )!}\int_{0}^{1}x^{\ell
}(1-x)^{M}\,dx=\frac{M!(M+1)!}{(M-\ell )!(M+\ell +1)!}=\prod_{j=0}^{\ell -1}%
\frac{M-j}{M+j+2}.
\end{equation*}%
From the last expression, it is clear that $\beta _{M,\ell }>0$ for $0\leq
\ell \leq M$ and that (\ref{eq:betalimit}) holds. 
\end{proof}

\subsection{The spherical classic limit}

We make some observations before the next result, which is the spherical
analogue of the classical limit of de Palma \cite{DePalma2018}. First
observe that if we write%
\begin{equation*}
\mu _{M}=\frac{1}{M+1}\sum\limits_{j=0}^{M}\delta _{\lambda _{j}}(x_{M})
\end{equation*}%
for the empirical spectral distribution of $X_{M}$, then $\mu _{M}$ is a
probability measure on $[0,1]$ and 
\begin{equation*}
\frac{1}{M+1}\func{Tr}\Phi (X_{M})=\int_{[0,1]}\Phi (t)\,d\mu _{M}(t)\text{.}
\end{equation*}%
We will also need the modulus of continuity of a continuous function%
\begin{equation*}
\Phi :[0,1]\rightarrow \mathbb{R}\text{,}
\end{equation*}%
which is defined as, for $s,t\in \lbrack 0,1]$, 
\begin{equation*}
\omega _{\Phi }(\delta )=\sup_{\left\vert s-t\right\vert \leq \delta
}\left\vert \Phi (s)-\Phi (t)\right\vert \text{.}
\end{equation*}%
Since $\Phi $ is continuous in $[0,1]$ , then $\Phi $ is uniformly
continuous, thus%
\begin{equation*}
\lim_{\delta \rightarrow 0}\omega _{\Phi }(\delta )=0\text{.}
\end{equation*}

\begin{lemma}
\label{lem:classical} Let $X_{M}$ be operators on $\mathcal{P}_{M}$ such
that 
\begin{equation*}
0\leq X_{M}\leq I,\qquad Q_{X_{M}}=q,
\end{equation*}%
where $q$ is the lower symbol of an operator on a fixed space $\mathcal{P}%
_{N}$. Then, for every continuous function $\Phi :[0,1]\rightarrow \mathbb{R}
$, 
\begin{equation}
\lim_{M\rightarrow \infty }\frac{1}{M+1}\func{Tr}\Phi
(X_{M})=\lim_{M\rightarrow \infty }\int_{[0,1]}\Phi (t)d\mu _{M}(t)=\int_{%
\mathbb{C}}\Phi (q(z))dm(z).  \label{eq:classicalconv}
\end{equation}
\end{lemma}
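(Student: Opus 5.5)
The plan is to reduce to polynomial test functions $\Phi(t)=t^k$ by Weierstrass approximation, and then to prove convergence of moments by a Berezin--Toeplitz comparison. Since all $\mu_M$ are probability measures on $[0,1]$, and $\Phi\circ q$ is bounded with $dm$ a probability measure, it suffices to prove (\ref{eq:classicalconv}) for $\Phi(t)=t^k$, $k\geq 0$. Indeed, if $P$ is a polynomial with $\sup_{[0,1]}|\Phi-P|\leq\varepsilon$, both sides of (\ref{eq:classicalconv}) change by at most $\varepsilon$; this is where the modulus of continuity would enter. The case $k=0$ is trivial, so the task is to show
\begin{equation*}
\frac{1}{M+1}\func{Tr}(X_M^k)\longrightarrow\int_{\mathbb{C}} q(z)^k\,dm(z),\qquad k\geq 1.
\end{equation*}

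The key device is to compare $X_M$ with the Toeplitz operator $\mathcal{T}_M(a_M)$ whose symbol is $a_M=\mathcal{B}_M^{-1}q$. By Lemma~\ref{lem:berezincalculus}, $q$ lies in $\mathcal{V}_N=\bigoplus_{\ell\le N}\mathcal{Y}_\ell$, so $a_M$ is defined. Since the lower symbol of $\mathcal{T}_M(a)$ is $\mathcal{B}_M a$, the operators $\mathcal{T}_M(a_M)$ and $X_M$ have the same lower symbol $q$. Injectivity in Lemma~\ref{lem:berezincalculus} then gives
\begin{equation*}
X_M=\mathcal{T}_M(a_M).
\end{equation*}
Because $q$ has components only in the finitely many spaces $\mathcal{Y}_\ell$, $\ell\le N$, and $\beta_{M,\ell}\to1$ by (\ref{eq:betalimit}), we get $a_M\to q$ uniformly on the sphere (all norms agree on the finite-dimensional space $\mathcal{V}_N$). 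Moreover $\|\mathcal{T}_M(a_M)-\mathcal{T}_M(q)\|_{\mathrm{op}}\le\|a_M-q\|_\infty\to0$, by the resolution of the identity (\ref{eq:resolution}) at level $M$. Since $0\le X_M\le I$ and $\|\mathcal{T}_M(q)\|\le1$, this yields
\begin{equation*}
\tfrac{1}{M+1}\bigl|\func{Tr}X_M^k-\func{Tr}\mathcal{T}_M(q)^k\bigr|\le k\,\|a_M-q\|_\infty\to0,
\end{equation*}
using the normalized trace bound $\tfrac1{M+1}|\func{Tr}(Y)|\le\|Y\|$ together with telescoping. So it remains to show $\frac1{M+1}\func{Tr}\,\mathcal{T}_M(q)^k\to\int q^k\,dm$.

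This last step is the main obstacle, and it is the standard Szeg\H{o}/Berezin--Toeplitz limit for a smooth (indeed polynomial) symbol. I would first try the asymptotic multiplicativity $\|\mathcal{T}_M(f)\mathcal{T}_M(g)-\mathcal{T}_M(fg)\|_{\mathrm{op}}\to0$ for $f,g\in\mathcal{V}_{kN}$. One route goes through the lower symbol: since $\func{Tr}\,\mathcal{T}_M(h)=(M+1)\int h\,dm$, it suffices to compare traces, i.e. $\frac1{M+1}\func{Tr}(\mathcal{T}_M(f)\mathcal{T}_M(g))$ with $\int fg\,dm$. One can write $\frac{1}{M+1}\func{Tr}(\mathcal{T}_M(f)\mathcal{T}_M(g))=\int f\,\mathcal{B}_M g\,dm$, and $\mathcal{B}_M g\to g$ uniformly on $\mathcal{V}_{kN}$ by (\ref{eq:betalimit}). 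For higher products I would use the kernel $(M+1)|\langle\kappa_{M,z},\kappa_{M,w}\rangle|^2=(M+1)\cos^{2M}(d(z,w)/2)$, which concentrates at the diagonal at scale $M^{-1/2}$. Combined with the uniform Lipschitz bound for $q$, this shows that $\frac1{M+1}\func{Tr}\,\mathcal{T}_M(q)^k$ is an iterated integral of $q(z_1)\cdots q(z_k)$ against a kernel approximating the diagonal. Hence it tends to $\int q^k\,dm$.

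If the kernel estimate for the cyclic product proves cumbersome, the fallback is a squeeze argument. The upper bound $\frac1{M+1}\func{Tr}\,\mathcal{T}_M(q)^k\le\int q^k\,dm$ follows for convex $t^k$ from the Berezin--Lieb inequality. For the lower bound, apply Jensen on the diagonal, $\langle \mathcal{T}_M(q)^k\kappa,\kappa\rangle\ge\langle\mathcal{T}_M(q)\kappa,\kappa\rangle^k=(\mathcal{B}_Mq)^k$, and integrate using the resolution of the identity; the result tends to $\int q^k\,dm$ because $\mathcal{B}_Mq\to q$ uniformly.
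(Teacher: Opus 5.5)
Your argument is correct, but it takes a different route from the paper. You reduce to the moments $t^k$ by Weierstrass. You then replace $X_M=\mathcal{T}_M(a_M)$ by $\mathcal{T}_M(q)$, which costs at most $k\|a_M-q\|_\infty$ in normalized trace. Finally you pinch $\frac{1}{M+1}\func{Tr}\,\mathcal{T}_M(q)^k$ between the two Berezin--Lieb bounds $\int(\mathcal{B}_Mq)^k\,dm$ and $\int q^k\,dm$. This fallback squeeze is a complete proof on its own. You can therefore drop the cyclic-kernel sketch; its kernel is complex-valued, so ``approximating the diagonal'' would need a separate $L^1$ bound on its modulus.

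The paper uses only the second moment. Its key identity $\frac{1}{M+1}\func{Tr}X_M^2=\int a_M q\,dm\to\int q^2\,dm$ is exactly your $k=2$ trace formula. It combines this with the coupling $d\pi_M(z,j)=|\langle\kappa_{M,z},\psi_{M,j}\rangle|^2\,dm(z)$, under which $q(z)$ is the conditional mean of the eigenvalue variable $\Lambda_M$. The excess second moment is then $\mathbb{E}_{\pi_M}|\Lambda_M-q|^2$, and Chebyshev plus the modulus of continuity gives the limit for every continuous $\Phi$ at once. That route needs no higher moments, no Weierstrass step, and no upper Berezin--Lieb inequality.

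Your route uses more machinery but makes the mechanism visible: the two Berezin--Lieb bounds merge because $\mathcal{B}_M\to I$ on $\mathcal{V}_N$. Applied directly to $X_M$, the lower half of your squeeze gives $\frac{1}{M+1}\func{Tr}X_M^k\geq\int q^k\,dm$ exactly for every $M$, since $Q_{X_M}=q$. This is the conditional Jensen inequality implicit in the paper's coupling.
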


\begin{proof}
By Lemma \ref{lem:berezincalculus}, the function $q$ belongs to $\mathcal{V}%
_{N}$. For $M\geq N$, define 
\begin{equation*}
a_{M}=\mathcal{B}_{M}^{-1}q
\end{equation*}%
on this fixed finite-dimensional harmonic space. Equations (\ref%
{eq:berezineigenvalue})--(\ref{eq:betalimit}) imply 
\begin{equation}
a_{M}\longrightarrow q\quad \text{uniformly on the sphere}.
\label{eq:upperconvergence}
\end{equation}%
The lower symbol of $\mathcal{T}_{M}(a_{M})$ is $\mathcal{B}_{M}a_{M}=q$.
The map taking an operator to its coherent-state lower symbol is injective
by Lemma \ref{lem:berezincalculus}. Hence 
\begin{equation}
X_{M}=\mathcal{T}_{M}(a_{M}).  \label{eq:upperrepresentation}
\end{equation}%
It follows that 
\begin{align}
\frac{1}{M+1}\func{Tr}X_{M}^{2}& =\int_{\mathbb{C}}a_{M}(z)Q_{X_{M}}(z)%
\,dm(z)  \notag \\
& =\int_{\mathbb{C}}a_{M}(z)q(z)\,dm(z)\longrightarrow \int_{\mathbb{C}%
}q(z)^{2}\,dm(z).  \label{eq:secondmoment}
\end{align}%
This second moment will be enough to control the convergence for every
continuous $\Phi $ . Consider the spectral decomposition of $X_{M}$, 
\begin{equation*}
X_{M}=\sum_{j=0}^{M}\lambda _{M,j}|\psi _{M,j}\rangle \langle \psi _{M,j}|%
\text{.}
\end{equation*}%
On $\mathbb{C}\times \{0,\ldots ,M\}$, define the probability measure 
\begin{equation*}
d\pi _{M}(z,j)=|\langle \kappa _{M,z},\psi _{M,j}\rangle |^{2}\,dm(z)\text{,}
\end{equation*}%
where counting measure is understood in the discrete variable $j$.\ The $z$%
-marginal is $dm$, while (\ref{eq:resolution}) shows that the $j$-marginal
assigns mass $1/(M+1)$ to each index. For the variables 
\begin{equation*}
\Lambda _{M}(z,j)=\lambda _{M,j},\qquad Y_{M}(z,j)=q(z)\text{,}
\end{equation*}%
we have 
\begin{equation*}
\mathbb{E}_{\pi _{M}}(\Lambda _{M}\mid z)=q(z)=Y_{M}.
\end{equation*}%
Observing that 
\begin{equation*}
\mathbb{E}_{\pi _{M}}(\Lambda _{M}Y_{M})=\mathbb{E}_{\pi _{M}}\left( Y_{M}%
\mathbb{E}_{\pi _{M}}(\Lambda _{M}\mid z)\right) =\mathbb{E}_{\pi
_{M}}(Y_{M}{}^{2})
\end{equation*}

since the variables are real valued, we obtain:%
\begin{eqnarray*}
\mathbb{E}_{\pi _{M}}|\Lambda _{M}-Y_{M}|^{2} &=&\mathbb{E}_{\pi _{M}}\left(
\Lambda _{M}-Y_{M}\right) ^{2}=\mathbb{E}_{\pi _{M}}\Lambda _{M}{}^{2}-2%
\mathbb{E}_{\pi _{M}}\Lambda _{M}Y_{M}+\mathbb{E}_{\pi _{M}}Y_{M}{}^{2} \\
&=&\mathbb{E}_{\pi _{M}}\Lambda _{M}{}^{2}-\mathbb{E}_{\pi _{M}}Y_{M}{}^{2}
\\
&=&\frac{1}{M+1}\func{Tr}X_{M}^{2}-\int_{\mathbb{C}}q^{2}\,dm\longrightarrow
0\text{,}
\end{eqnarray*}%
where we used the limit (\ref{eq:secondmoment}) in the last line. Both
random variables take values in $[0,1]$. If $\omega _{\Phi }$ is the modulus
of continuity of $\Phi $, then for every $\delta >0$, 
\begin{align*}
\mathbb{E}_{\pi _{M}}|\Phi (\Lambda _{M})-\Phi (Y_{M})|& \leq \omega _{\Phi
}(\delta )+2\Vert \Phi \Vert _{\infty }\pi _{M}\{|\Lambda _{M}-Y_{M}|>\delta
\} \\
& \leq \omega _{\Phi }(\delta )+\frac{2\Vert \Phi \Vert _{\infty }}{\delta
^{2}}\mathbb{E}_{\pi _{M}}|\Lambda _{M}-Y_{M}|^{2}.
\end{align*}%
Now let $M\rightarrow \infty $ and then $\delta \downarrow 0$ to yield 
\begin{equation*}
\mathbb{E}_{\pi _{M}}\Phi (\Lambda _{M})-\mathbb{E}_{\pi _{M}}\Phi
(Y_{M})\longrightarrow 0.
\end{equation*}%
The two expectations are precisely the two sides of (\ref{eq:classicalconv}%
). 
\end{proof}

\subsection{Proof of Theorem \protect\ref{thm:sphericaldepalma} on $SU(2)$
Husimi maximization}

We can now pass from matrix majorization to convex maximization of Husimi
functions.

\begin{proof}
\textbf{Step 1. }We first prove the result for continuous convex functions.
Theorem \ref{thm:iterated} and positive rescaling give 
\begin{equation*}
X_{M}(\rho )\prec X_{M}(\rho ^{\downarrow }).
\end{equation*}%
Karamata's inequality therefore yields 
\begin{equation*}
\frac{1}{M+1}\func{Tr}\Phi (X_{M}(\rho ))\leq \frac{1}{M+1}\func{Tr}\Phi
(X_{M}(\rho ^{\downarrow })).
\end{equation*}%
Both operator sequences satisfy the hypotheses of Lemma \ref{lem:classical},
by Lemma \ref{lem:exactlower}. Letting $M\rightarrow \infty $ proves (\ref%
{eq:sphericaldepalma}) when $\Phi $ is continuous.

\textbf{Step 2. }We now show that the continuity assumption on $\Phi $ can
be removed. Indeed, let $\Phi :[0,1]\rightarrow \mathbb{R}$ be any convex
function. For each $n\geq 1$, let $\Phi _{n}$ be the continuous
piecewise-linear function that agrees with $\Phi $ at the points 
\begin{equation*}
0,\frac{1}{n},\frac{2}{n},\ldots ,\frac{n-1}{n},1.
\end{equation*}%
Since $\Phi $ is convex, $\Phi _{n}$ is also convex. Moreover, 
\begin{equation*}
\Phi _{n}(t)\longrightarrow \Phi (t),\qquad 0\leq t\leq 1.
\end{equation*}%
At the endpoints this follows from $\Phi _{n}(0)=\Phi (0)$ and $\Phi
_{n}(1)=\Phi (1)$, while in the interior it follows from the continuity of
every convex function on $(0,1)$. Since $\Phi $ is finite on $[0,1]$, the
functions $\Phi _{n}$ are uniformly bounded. Applying Step 1 to $\Phi _{n}$
gives 
\begin{equation*}
\int_{\mathbb{C}}\Phi _{n}(Q_{\rho }(z))dm(z)\leq \int_{\mathbb{C}}\Phi
_{n}(Q_{\rho ^{\downarrow }}(z))dm(z).
\end{equation*}%
Since $0\leq Q_{\rho },Q_{\rho ^{\downarrow }}\leq 1$, dominated convergence
now gives 
\begin{equation}
\int_{\mathbb{C}}\Phi (Q_{\rho }(z))dm(z)\leq \int_{\mathbb{C}}\Phi (Q_{\rho
^{\downarrow }}(z))dm(z).  \label{eq:convexwithoutcontinuity}
\end{equation}%
Consequently, the conclusion of the preceding theorem holds for every convex
function $\Phi :[0,1]\rightarrow \mathbb{R}$, without a continuity
assumption. Applying the result to $-\Phi $ gives the corresponding reversed
inequality for concave functions.
\end{proof}

\section{Spherical isoperimetric inequalities for $SU(2)$}

\subsection{Concentration operators and Ky Fan's principle}

For a measurable set $\Omega \subset \mathbb{C}$, define the Toeplitz
operator with symbol $1_{\Omega }$, 
\begin{equation}
T_{\Omega }=(N+1)\int_{\Omega }|\kappa _{N,z}\rangle \langle \kappa
_{N,z}|\,dm(z).  \label{eq:concentrationoperator}
\end{equation}%
Then $0\leq T_{\Omega }\leq I$, and 
\begin{equation}
\func{Tr}T_{\Omega }=(N+1)m(\Omega ).  \label{eq:traceconcentration}
\end{equation}%
The operator $T_{\Omega }$\ is therefore trace class and positive, therefore
compact. Let 
\begin{equation*}
\lambda _{1}(\Omega )\geq \cdots \geq \lambda _{N+1}(\Omega )\geq 0
\end{equation*}%
be its eigenvalues. For $1\leq r\leq N+1$, Ky Fan's principle gives 
\begin{equation}
\frac{1}{r}\sum_{j=1}^{r}\lambda _{j}(\Omega )=(N+1)\max_{\substack{ %
P=P^{\ast }=P^{2} \\ \func{rank}P=r}}\int_{\Omega }Q_{P/r}(z)\,dm(z).
\label{KyFan}
\end{equation}%
Indeed, 
\begin{equation*}
Q_{P/r}(z)=\frac{1}{r}\left\langle P\kappa _{N,z},\kappa _{N,z}\right\rangle
_{\mathcal{P}_{N}}\text{,}
\end{equation*}%
and the definition of $T_{\Omega }$ 
\begin{equation*}
Tr\left( PT_{\Omega }\right) =r(N+1)\int_{\Omega }Q_{P/r}(z)dm(z)\text{,}
\end{equation*}%
thus Ky Fan principle indeed gives (\ref{KyFan}).

\subsection{Proof of Theorem \protect\ref{thm:sphericalkyfan}}

\begin{proof}
Theorem \ref{thm:sphericaldepalma} gives, for every $a\in \lbrack 0,1]$, 
\begin{equation}
\int_{\mathbb{C}}(Q_{\rho }-a)_{+}(z)dm(z)\leq \int_{\mathbb{C}}(Q_{\rho
^{\downarrow }}-a)_{+}(z)dm(z).  \label{eq:hinge}
\end{equation}%
This is the exact form needed for concentration on sets of prescribed
measure. The cases $\alpha =0$ and $\alpha =1$ follow from $T_{\varnothing
}=0$ and $T_{\mathbb{S}^{2}}=I$. Hence assume $0<\alpha <1$. The case $r=N+1$
follows from (\ref{eq:traceconcentration}). Assume $1\leq r\leq N$. Let $P$
be an arbitrary rank-$r$ projection and set $\rho =P/r$. Choose $a$ so that,
up to a null boundary, 
\begin{equation*}
D_{\alpha }=\{Q_{\rho _{N,r}}>a\}.
\end{equation*}%
Such an $a$ exists by (\ref{eq:binomialderivative}). Since $m(\Omega
)=\alpha $, 
\begin{align*}
\int_{\Omega }Q_{\rho }(z)dm(z)& \leq \alpha a+\int_{\mathbb{C}}(Q_{\rho
}-a)_{+}(z)dm(z) \\
& \leq \alpha a+\int_{\mathbb{C}}(Q_{\rho _{N,r}}-a)_{+}(z)dm(z) \\
& =\int_{D_{\alpha }}Q_{\rho _{N,r}}(z)dm(z).
\end{align*}%
The middle inequality is (\ref{eq:hinge}). We have thus obtained%
\begin{equation*}
\int_{\Omega }Q_{\rho }(z)dm(z)\leq \int_{D_{\alpha }}Q_{\rho _{N,r}}(z)dm(z)%
\text{.}
\end{equation*}%
Maximizing the left side over $P$ and applying (\ref{KyFan}) shows that%
\begin{equation}
\sum_{j=1}^{r}\lambda _{j}(\Omega )\leq r(N+1)\int_{D_{\alpha }}Q_{\rho
_{N,r}}(z)dm(z)  \label{eqmainbefore}
\end{equation}%
For the cap, the monomials diagonalize the concentration operator, and their
first $r$ elements realize the stated value. The displayed three-line
estimate is the bathtub principle in the form needed here; compare \cite[%
Theorem~1.14]{LiebLoss}. The passive rearrangement of $P/r$ is 
\begin{equation}
\rho _{N,r}=\frac{1}{r}\sum_{n=0}^{r-1}|e_{n}^{(N)}\rangle \langle
e_{n}^{(N)}|.  \label{eq:flatreference}
\end{equation}%
Introduce the spherical radial coordinate 
\begin{equation}
x=\frac{|z|^{2}}{1+|z|^{2}}\in \lbrack 0,1].  \label{eq:xcoord}
\end{equation}%
The measure of the centered cap $\{x<\alpha \}$ is exactly $\alpha $, and 
\begin{equation}
Q_{\rho _{N,r}}(x)=\frac{1}{r}\sum_{n=0}^{r-1}\binom{N}{n}x^{n}(1-x)^{N-n}.
\label{eq:passivehusimi}
\end{equation}%
For $1\leq r\leq N$, 
\begin{equation}
\frac{d}{dx}\sum_{n=0}^{r-1}\binom{N}{n}x^{n}(1-x)^{N-n}=-N\binom{N-1}{r-1}%
x^{r-1}(1-x)^{N-r}.  \label{eq:binomialderivative}
\end{equation}%
Thus $Q_{\rho _{N,r}}$ is strictly decreasing away from the two poles. For $%
r=N+1$, it is the constant $1/(N+1)$.

It remains to show that the right-hand side is the sum of the first $r$
eigenvalues of the concentration operator for the centered cap. This is a
routine calculation. Let 
\begin{equation*}
D_{\alpha }=\left\{ z\in \mathbb{C}:\frac{|z|^{2}}{1+|z|^{2}}<\alpha
\right\} .
\end{equation*}%
Rotational invariance shows that the monomials are orthogonal on $D_{\alpha }
$. Hence 
\begin{align}
\left\langle T_{D_{\alpha }}e_{n}^{(N)},e_{m}^{(N)}\right\rangle _{\mathcal{P%
}_{N}}& =(N+1)\int_{D_{\alpha }}\frac{e_{n}^{(N)}(z)\overline{e_{m}^{(N)}(z)}%
}{(1+|z|^{2})^{N}}\,dm(z)  \notag \\
& =\delta _{n,m}\lambda _{n+1}(D_{\alpha }).  \label{eq:capdiagonal}
\end{align}%
Using polar coordinates and the substitution 
\begin{equation*}
x=\frac{|z|^{2}}{1+|z|^{2}},
\end{equation*}%
we obtain 
\begin{equation}
\lambda _{n+1}(D_{\alpha })=(N+1)\binom{N}{n}\int_{0}^{\alpha
}x^{n}(1-x)^{N-n}\,dx.  \label{eq:caplambda}
\end{equation}%
Thus $T_{D_{\alpha }}$ is diagonal in the monomial basis. Moreover, 
\begin{equation}
\lambda _{n+2}(D_{\alpha })-\lambda _{n+1}(D_{\alpha })=-\binom{N+1}{n+1}%
\alpha ^{n+1}(1-\alpha )^{N-n}<0,  \label{eq:capdecrease}
\end{equation}%
for $0\leq n\leq N-1$. Therefore 
\begin{equation*}
\lambda _{1}(D_{\alpha }),\ldots ,\lambda _{r}(D_{\alpha })
\end{equation*}%
are the first $r$ eigenvalues arranged in decreasing order. Let 
\begin{equation*}
P_{N,r}=\sum_{n=0}^{r-1}|e_{n}^{(N)}\rangle \langle e_{n}^{(N)}|,\qquad \rho
_{N,r}=\frac{1}{r}P_{N,r}.
\end{equation*}%
Since 
\begin{equation*}
\lambda _{n+1}(D_{\alpha })=\left\langle T_{D_{\alpha
}}e_{n}^{(N)},e_{n}^{(N)}\right\rangle _{\mathcal{P}_{N}}=(N+1)\int_{D_{%
\alpha }}\left\vert \left\langle e_{n}^{(N)},\kappa _{N,z}\right\rangle _{%
\mathcal{P}_{N}}\right\vert ^{2}\,dm(z)\text{.}
\end{equation*}

It follows from (\ref{eq:capdiagonal}) that 
\begin{align}
(N+1)\int_{D_{\alpha }}Q_{\rho _{N,r}}(z)\,dm(z)& =\frac{N+1}{r}%
\sum_{n=0}^{r-1}\int_{D_{\alpha }}\left\vert \left\langle e_{n}^{(N)},\kappa
_{N,z}\right\rangle _{\mathcal{P}_{N}}\right\vert ^{2}\,dm(z)  \notag \\
& =\frac{1}{r}\sum_{j=1}^{r}\lambda _{j}(D_{\alpha }).
\label{eq:Husimieigenvaluescap}
\end{align}%
The beta integral in (\ref{eq:caplambda}) also gives 
\begin{equation}
\lambda _{n+1}(D_{\alpha })=\sum_{k=n+1}^{N+1}\binom{N+1}{k}\alpha
^{k}(1-\alpha )^{N+1-k}.  \label{eq:capbinomialsum}
\end{equation}%
Summing over $0\leq n\leq r-1$, we obtain 
\begin{equation}
\sum_{j=1}^{r}\lambda _{j}(D_{\alpha })=r-\sum_{k=0}^{r-1}(r-k)\binom{N+1}{k}%
\alpha ^{k}(1-\alpha )^{N+1-k}.  \label{eq:partialclosed}
\end{equation}%
Combining this identity with (\ref{eqmainbefore}) and (\ref%
{eq:Husimieigenvaluescap}) gives 
\begin{equation*}
\sum_{j=1}^{r}\lambda _{j}(\Omega )\leq r(N+1)\int_{D_{\alpha }}Q_{\rho
_{N,r}}(z)\,dm(z)=\sum_{j=1}^{r}\lambda _{j}(D_{\alpha }).
\end{equation*}%
\hfill 
\end{proof}

\section*{Acknowledgements}
The author is thankful to Michael Speckbacher and Franz Luef, for
discussions around these topics. GPT 5.6 has been used in the editing
process, for the sake of readability and simplicity in exposition and in some auxiliary arguments (in particular
to help achieving a clear presentation of the identification of the channels
in section 3.3), help eliminating typos, mistakes and inconsistencies from
the text, to present the .tex file in an elegant format and to produce the illustration of the action of the spherical amplification channel. This research
was funded in part by the Austrian Science Fund (FWF) through the project
10.55776/PAT8205923 (L.D.A.). For open access purposes, the author has
applied a CC BY public copyright license to any author-accepted manuscript
version arising from this submission.

\end{document}